\documentclass[lettersize,journal]{IEEEtran}
\usepackage{amsmath,amssymb,amsfonts}
\usepackage{algorithmic}
\usepackage{graphicx}
\usepackage{textcomp}

\usepackage{amsthm,amssymb}
\usepackage{mathrsfs}
\usepackage{enumerate}
\usepackage{float}
\usepackage{color}
\usepackage{bm}
\usepackage{cite}
\usepackage{subcaption}
\usepackage{caption}
\usepackage{tikz}
\usetikzlibrary{arrows.meta, decorations.pathmorphing}
\usetikzlibrary{decorations.pathmorphing, patterns, arrows.meta}
\usetikzlibrary{decorations.pathmorphing, patterns, calc}
\usepackage{hyperref}

\makeatletter

\newcommand{\Rmnum}[1]{\expandafter\@slowromancap\romannumeral #1@}


\newtheorem{assumption}{Assumption}
\newtheorem{lemma}{Lemma}
\newtheorem{remark}{Remark}
\newtheorem{theorem}{Theorem}

\newtheorem{corollary}{Corollary}

\newcommand{\cond}{{\rm cond }}

\newcommand{\revise}[1]{{#1}}
\newcommand{\modify}[1]{{#1}}

\newcommand{\syl}[1]{\textcolor{black}{#1}}
\newcommand{\sun}[1]{\textcolor{black}{#1}}

\newcommand{\response}[1]{\textcolor{black}{#1}}

\newcommand{\ssun}[1]{\textcolor{blue}{#1}}

\hyphenation{op-tical net-works semi-conduc-tor IEEE-Xplore}
\def\BibTeX{{\rm B\kern-.05em{\sc i\kern-.025em b}\kern-.08em
		T\kern-.1667em\lower.7ex\hbox{E}\kern-.125emX}}
\usepackage{balance}
\begin{document}
	\title{Finite Sample Analysis of MIMO Systems Identification}
	\author{Shuai Sun, Jiayun Li, and Yilin Mo*, \IEEEmembership{Member, IEEE}
		\thanks{This paper is supported by the BNRist project (No. BNR2024TD03003).}
		\thanks{The authors are with Department of Automation, BNRist, Tsinghua University. Emails: \{suns19, lijiayun22\}@mails.tsinghua.edu.cn, ylmo@tsinghua.edu.cn.}
		\thanks{*: Corresponding author.}}
	
	\markboth{IEEE Transactions on Signal Processing}%
	{Finite \syl{Sample} Analysis of MIMO Systems Identification}
	
	\maketitle
	
	\begin{abstract}
		This paper is concerned with the finite sample identification performance of an $n$ dimensional discrete-time Multiple-Input Multiple-Output (MIMO) Linear Time-Invariant system, with $ p $ inputs and $ m $ outputs. We prove that the widely-used Ho-Kalman algorithm and Multivariable Output Error State Space (MOESP) algorithm are ill-conditioned for MIMO systems when $ n/m $ or $ n/p $ is large. Moreover, by analyzing the Cram\'er–Rao bound, we derive a fundamental limit for identifying the real and stable (or marginally stable) poles of MIMO systems and prove that the sample complexity for \textit{any} unbiased pole estimation algorithm to reach a certain level of accuracy explodes superpolynomially with respect to $ n/(pm) $. Numerical results are provided to illustrate the ill-conditionedness of Ho-Kalman algorithm and MOESP algorithm as well as the fundamental limit on identification. 
	\end{abstract}
	
	\begin{IEEEkeywords}
		System Identification, Ho-Kalman Algorithm, MOESP algorithm, Cram\'er–Rao bound, Fisher Information Matrix
	\end{IEEEkeywords}

\section{Introduction}
Linear Time-Invariant (LTI) systems are an important class of models with many applications in control, signal processing, communication and other engineering fields~\cite{ljung1986system}. Classical results on LTI system identification have mainly focused on guaranteeing asymptotic convergence properties for specific estimation schemes~\cite{goodwin1977dynamic}, such as ordinary least-squares method (OLS)~\cite{ljung1976consistency} and subspace identification methods (SIM)~\cite{deistler1995consistency}. Recently, there has been an increasing interest in finite sample complexity and non-asymptotic analysis~\cite{zheng2020non}, and \modify{significant progress~\cite{campi2002finite,vidyasagar2008learning,hardt2018gradient,care2017finite,simchowitz2019learning,fattahi2019learning,sarkar2019near,tsiamis2019finite,tsiamis2022online,wang2022large, dean2020sample, oymak2021revisiting,jedra2022finite} has been made}.

Depending on whether state measurements are directly available, the systems under consideration can be broadly categorized into two classes. For \textit{fully observed} LTI systems, where the state can be accurately measured, recent works~\cite{simchowitz2018learning,faradonbeh2018finite,sarkar2019near,dean2020sample} derive upper bounds on identification error of the OLS estimator from a finite number of sample trajectories. However, these error bounds usually depend on the true system parameters, which are unknown during the identification process. Dean et al.~\cite{dean2020sample} further provide a data-dependent upper bound on the identification error. Besides OLS, Wagenmaker and Jamieson ~\cite{wagenmaker2020active} propose an active learning algorithm for identification and derive an error bound. A summary on finite \syl{sample} identification of a fully observed system can be found in~\cite{matni2019tutorial}. Notice that the above works mainly focus on the upper bound of finite \syl{sample} identification error for specific identification algorithms such as OLS. On the other hand, Jedra and Proutiere~\cite{jedra2019sample} establish lower bounds on sample complexity for a class of locally-stable algorithms in the Probably Approximately Correct (PAC) framework.
Tsiamis and Pappas~\cite{tsiamis2021linear} reveal that one can craft a fully observed linear system with non-isotropic noise, where the worst-case sample complexity grows exponentially with the system dimension in the PAC framework, independent of identification algorithm used. However, their result is limited to under-actuated and under-excited linear systems with specific structures.

The identification problem of \textit{partially observed} LTI systems is considerably more challenging due to the fact that the state cannot be accurately obtained~\cite{zheng2021sample}. Some recent works~\cite{zheng2020non,oymak2021revisiting,simchowitz2019learning} consider a two-step approach to estimate the state-space model from finite input/output sample trajectories: First step, OLS is used to recover Markov parameters from finite input/output data, which in turn allows recovering a balanced state-space realization via the celebrated Ho-Kalman algorithm. As a result, the upper bounds on the finite \syl{sample} identification error of Markov parameters as well as the system matrices can be derived.
\revise{Another line of research~\cite{chiuso2004ill,hachicha2014n4sid,ikeda2015estimation} focuses on estimating state-space models using the MOESP algorithm, which is a well-known subspace identification method consisting of estimating the (extended) observability matrix followed by the system matrices.}
However, it has been reported in the literature that the subspace method may be ill-conditioned, e.g., see~\cite{chiuso2004ill, hachicha2014n4sid}.
In addition, other learning-based identification methods, such as reinforcement learning method\cite{lale2020logarithmic} and gradient descent algorithm~\cite{hardt2018gradient}, have also been proposed to solve the identification problem.

Another relevant research field is the problem of detecting the number of complex exponentials and estimating their parameters $c_i, \lambda_i$ from noisy signals in the form of $y_k = \sum c_i \lambda_i^k + \text{noise}$, \ssun{commonly known as the harmonic retrieval problem (HRP). The HRP has} widespread applications in digital communications, audio processing, radar systems, and other fields~\cite{kay1981spectrum,percival1993spectral,stoica2005spectral}. In recent decades, a large number of numerical methods have emerged, such as maximum likelihood-based techniques~\cite{lang1980frequency}, Minimum Variance Distortionless Response (MVDR) ~\cite{capon1969high}, SParse Iterative Covariance-based Estimation (SPICE)~\cite{stoica2010new}, optimization-based techniques~\cite{hayes2023sinusoidal}, neural network-based approaches~\cite{pan2021deep}, and information measures approaches~\cite{percival1993spectral,stoica2005spectral}. Alternatively, \ssun{the HRP can be formulated as  a partially-observed LTI system identification problem~\cite{kung1983state,shi1994harmonic}} and solved using Ho-Kalman or MOESP algorithms.

In most of the above research, bounds on the finite \syl{sample} identification error of specific algorithms, such as Ho-Kalman algorithm and MOESP algorithm, have been derived, these bounds usually depend on the true system parameters or input/output trajectories. Hence, it is only possible to prove that the identification problem for a specific system is ill-conditioned. On the other hand, in this paper we provide ``\emph{uniform}'' bound of identification error for the Ho-Kalman and MOESP algorithm. As a result, we are able to prove that the two algorithms are ill-conditioned for MIMO systems whenever $n/m$ or $n/p$ is large regardless of system parameters, where $n,\,p,\,m$ are the dimensions of the state, input and output respectively. Moreover, we show that the ill-conditionedness is not caused by a specific algorithm design, but fundamentally rooted in the identification problem itself. To this end, we derive a lower bound on the identification error of the stable (or marginally stable) and real poles of MIMO systems for \textit{any} unbiased estimator and further analyze the sample complexity for \textit{any} unbiased pole estimation algorithm to reach a certain level of accuracy.

The main contribution of this paper is as follows:
\begin{enumerate}
	\item We prove that the widely-used Ho-Kalman algorithm and MOESP algorithm are ill-conditioned for MIMO systems when $n/m$ or $n/p$ is large.
	\item We \modify{derive a lower bound on the identification error of stable (or marginally stable) and real poles of MIMO systems by analyzing the Fisher Information Matrix used in Cram\'er–Rao bound}.
	\item \modify{We reveal that the sample complexity on the stable (or marginally stable) and real poles of MIMO systems using \textit{any} unbiased estimation algorithm to reach a certain level of accuracy explodes superpolynomially with respect to $n/(pm)$}.
\end{enumerate}
Previous versions~\cite{ascc2022,icca2022,cdc2022} of this work focus on \syl{the analysis of identification limits for} Single-Input Single-Output (SISO) systems, where in this paper we extend the results to general MIMO systems identification.

This paper is organized as follows: Section~\ref{sec:classical algorithms} formulates the problem and evaluates the finite \syl{sample} performance of the widely-used Ho-Kalman algorithm and MOESP algorithm.
\modify{Section~\ref{sec:fisher information matrix} derives the Fisher Information matrix of unknown system poles, which is used in Section~\ref{sec:sample complexity} to lower bound the identification error via Cram\'er–Rao bound and analyze the sample complexity}. Section \ref{sec:numerical results} provides numerical simulations and finally Section \ref{sec:conclusion} concludes this paper.

\textbf{Notations}: 
$\mathbf{0}$ is an all-zero matrix of proper dimensions. For any $ x \in \mathbb{R} $, $ \lfloor x \rfloor $ denotes the largest integer not exceeding $ x $, and $ \lceil x \rceil $ denotes the smallest integer not less than $ x $.
The Frobenius norm is denoted by $ \|A\|_{\rm F} = \sqrt{{\rm tr}\left(A^{\rm H}A\right)} $, and $ \|A\| $ is the spectral norm of $A$, i.e., its largest singular value $ \sigma_{\max}(A) $. 
$ \sigma_{j}(A) $ denotes the $ j $-th largest singular value of $ A $, and $ \sigma_{\min}(A) $ denotes the smallest non-zero singular value of $ A $. 
$ \lambda(A) $ denotes the spectrum of a square matrix $ A $.
The Moore-Penrose inverse of matrix $ A $ is denoted by $ A^\dagger $. $ {\rm cond}(A) = \|A\|\|A^\dagger\| $ denotes the condition number of $ A $.
$ I_n $ denotes the $ n \times n $ identity matrix. 
$ \otimes $ is the Kronecker product.
The matrix inequality $ A \succeq B$ implies that matrix $ A -B $ is positive semi-definite. 
Multivariate Gaussian distribution with mean $ \mu $ and covariance $ \Sigma $ is denoted by $ \mathcal{N}(\mu,\Sigma) $. 
\ssun{The big-$\mathcal{O}$ notation $\mathcal{O}\{f(n)\}$ represents a function that grows at most as fast as $f(n)$.
The big-$\Omega$ notation $\Omega\{f(n)\}$ represents a function that grows at least as fast as $f(n)$.
The big-$\Theta$ notation $\Theta\{f(n)\}$ represents a function that grows as fast as $f(n)$.}

\section{Finite Time Performance Analysis of Classical Identification Algorithms}\label{sec:classical algorithms}
In this section, we formulate the MIMO systems identification problem and evaluate the finite \syl{sample} performance of the widely-used Ho-Kalman algorithm and MOESP algorithm. We consider the identification problem of an observable and controllable\footnote{If the system is not observable or controllable, we could instead focus on the identification of the observable and controllable part of the system.} LTI system evolving according to
\begin{equation}\label{linear_system0}
	\begin{aligned}
		x_{k+1} &= A x_{k}+ B u_{k} , \\
		y_{k} &= 
		C x_{k}+ v_{k},
	\end{aligned}
\end{equation}
\modify{based on finite input/output sample data}, where $x_k \in \mathbb{R}^n$, $u_k \in \mathbb{R}^p$, $y_k \in \mathbb{R}^m$ are the system state, the input and the output, respectively, and $v_k \sim \mathcal{N}(\mathbf{0},\mathcal{R})$ with $\mathcal R>0$ is the \response{i.i.d.} measurement noise. $ A,B,C $ are \textbf{unknown} matrices with appropriate dimensions. For convenience, {define} 
\[
\overline{\bm{\delta}} \triangleq \left(\max_{i,j}|b_{ij}| \right)\left(\max_{i,j}|c_{ij}|\right).
\]

\begin{assumption}\label{assumption1}
	\ssun{Matrix $A$ is diagonalizable, and its eigenvalues, which are the poles of the system~\eqref{linear_system0}, are distinct and real.} The system is both observable and controllable.
\end{assumption}

\begin{remark}
	\response{ It is worth noticing that the observability and controllability condition implies that the state-space representation is minimal. Furthermore, since there are infinitely many state-space models similar to system~\eqref{linear_system0}, which share the same input-output relationship, Assumption~\ref{assumption1} is necessary to make the identification problem well-defined.}
	
	It is also worth noticing that Assumption~\ref{assumption1} can be extended to include systems containing complex or non-single poles, by focusing on the subsystem consisting of distinct real poles. To be specific, we could use a different realization of \eqref{linear_system0} with the following form, without changing the input-output relationship of the original system:
	\begin{displaymath}
		\begin{aligned}
			x_{k+1} &= \begin{bmatrix}
				A_1&\\
				&A_2
			\end{bmatrix}
			x_{k}+ 
			\begin{bmatrix}
				B_1\\
				B_2
			\end{bmatrix} u_{k}, \\
			y_{k} &= 
			\begin{bmatrix}
				C_1&C_2	
			\end{bmatrix} x_{k}+ v_{k},
		\end{aligned}
	\end{displaymath}
	where $A_1$ contains the single real poles and $A_2$ contains the rest of the poles. 
	
	One could assume that the $A_2, B_2, C_2$ matrices are provided by an oracle. As a result, the identification of $(A_1, B_1, C_1)$ can be carried out on the following system:
	\begin{displaymath}
		\begin{aligned}
			\tilde x_{k+1} &= A_1 \tilde x_{k}+B_1u_k, \\
			\tilde y_{k} &= C_1	x_{k}+ v_{k} = y_k - C_2(zI-A_2)^{-1}B_2 u_k.
		\end{aligned}
	\end{displaymath}
	However, as will be shown later, even with this additional information, the identification of $A_1, B_1, C_1$ is still ill-conditioned.
\end{remark}

\begin{remark}
	\ssun{Notice that the problem of estimating parameters $c_i, \lambda_i$ from the noisy signal $y_k = \sum_{i=1}^n c_i \lambda_i^k + \text{noise}$ is known as the harmonic retrieval problem (HRP). As is well-known~\cite{kung1983state,shi1994harmonic}, the HRP can be cast as the identification problem of the following SISO LTI system}:
	\begin{equation}\label{eq:noisysignalest}
		\begin{aligned}
			x_{k+1} &= {\rm diag}(\lambda_1,\ldots,\lambda_n) x_{k} + \mathbf 1 u_k, \\
			y_k &= \begin{bmatrix} c_1&\cdots&c_n\end{bmatrix}x_k + v_k,
		\end{aligned}
	\end{equation}
	where $\mathbf 1$ is an all $1$ vector of proper dimension. It is easy to see that the impulse response of the LTI system is precisely the noisy signal $y_k = \sum_{i=1}^n c_i \lambda_i^k + v_k$. 
\end{remark}

Before continuing on, we shall state several results regarding the (extended) observability, controllability and the Hankel matrices $O$, $Q$ and $H$ of the system \eqref{linear_system0}, which are defined as:
\begin{align}
	O &\triangleq \begin{bmatrix}
		C^\top & (CA)^\top & \cdots & (CA^{K_1-1})^\top
	\end{bmatrix}^\top, \\
	Q &\triangleq \begin{bmatrix}
		B & AB& \cdots & A^{K_2-1}B
	\end{bmatrix},\\
	H &\triangleq \begin{bmatrix}
		CB & CAB &\cdots & CA^{K_2-1}B\\
		CAB & CA^2B &\cdots & CA^{K_2}B\\
		\vdots &  \vdots &\ddots &\vdots \\
		CA^{K_1-1}B & CA^{K_1}B &\cdots  &CA^{K_1+K_2-2}B
	\end{bmatrix} = OQ,
\end{align}
where $K_1$ and $K_2$ are large enough such that $O$ and $Q$ are full column and row rank respectively.  

The condition number of  matrices $ O $ and  $ Q $ and the $n$-th largest singular value of the Hankel matrix $ H $ of the system \eqref{linear_system0} can be characterized by the following Lemma, whose proof is reported in the Appendix: 
\begin{lemma}\label{Ho-Kalman is ill-conditioned}
	Suppose that the matrices $O$ and $Q$ are full column and row rank respectively, then the condition numbers of $ O $ and $ Q $  of the system \eqref{linear_system0} satisfy
	\begin{equation}\label{O and Q}
		{\rm cond}(O) \geq \frac{1}{4}\rho^{\frac{\left\lfloor \frac{n-1}{2m} \right\rfloor}{\log (2mK_1)}}, \quad
		{\rm cond}(Q ) \geq \frac{1}{4}\rho^{\frac{\left\lfloor \frac{n-1}{2p} \right\rfloor}{\log (2pK_2)}},
	\end{equation}
	where  $ \rho \triangleq e^{\frac{\pi^2}{4}} \approx 11.79$.   
	
	Moreover, if the system \eqref{linear_system0} is stable (or marginally stable), then
	the $n$-th largest singular value of $ H $ satisfies the following inequality
	\begin{equation}\label{L}
		\sigma_{n}(H) \leq 2\overline{\bm{\delta}} nK\sqrt{pm} \rho^{-\max \left\{
			\frac{\left\lfloor \frac{n-1}{2m} \right\rfloor}{\log (2mK_1)}, \frac{\left\lfloor \frac{n-1}{2p} \right\rfloor}{\log (2pK_2)}	
			\right\} },
	\end{equation}
	where $K=K_1+K_2$ and $\rho=e^{\frac{\pi^2}{4}}$. 
\end{lemma}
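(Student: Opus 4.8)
The statement has two parts: the condition-number lower bounds \eqref{O and Q} for $O$ and $Q$, and the singular-value upper bound \eqref{L} for $H$. The plan is to treat the conditioning of $O$ as the technical core, obtain the bound for $Q$ by an exact structural symmetry, and finally assemble \eqref{L} from the conditioning estimates together with the factorization $H=OQ$.

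For ${\rm cond}(O)$, the first step is to exploit Assumption~\ref{assumption1}: since $A={\rm diag}(\lambda_1,\dots,\lambda_n)$ with the $\lambda_j$ distinct and real, the $j$-th column of $O$ is $v(\lambda_j)\otimes c_j$, where $v(\lambda)=(1,\lambda,\dots,\lambda^{K_1-1})^\top$ and $c_j$ is the $j$-th column of $C$; equivalently $\|Oa\|^2=\sum_{k=0}^{K_1-1}\|CA^k a\|^2$ for every $a$. I would lower bound $\sigma_{\max}(O)$ cheaply by the norm of a single, suitably chosen column, and then spend the real effort on an \emph{upper} bound for $\sigma_{\min}(O)$, obtained by exhibiting a unit vector $a$ that $O$ maps to something exponentially small. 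This reduces to the classical phenomenon that real Vandermonde systems are exponentially ill-conditioned: one constructs a nonzero real polynomial of degree controlled by the available powers that stays uniformly small on all the nodes while carrying a large coefficient norm, and the sharpest such extremal (Chebyshev-type) estimate is what produces the base $\rho=e^{\pi^2/4}$. The $m$ output channels bundle roughly $m$ nodes into each effective scalar degree, which turns the exponent from order $n$ into $\lfloor\frac{n-1}{2m}\rfloor$, while the $\log(2mK_1)$ in the denominator arises from the normalization/discretization needed to convert the extremal estimate into a statement across all $K_1$ powers and $m$ channels.

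I expect this last step — making the lower bound \emph{uniform} both over the placement of the real poles and over the full-rank matrix $C$ (equivalently, over the column weights $\|c_j\|$), so that even the best-conditioned admissible configuration is still exponentially ill-conditioned — to be the main obstacle; the $\tfrac14$ prefactor and the precise constant $\rho$ are exactly the residue of controlling these worst cases, and a pigeonhole over magnitude/weight scales is the natural device for absorbing them into the $\log(2mK_1)$ factor. The bound for ${\rm cond}(Q)$ then requires no new work: $Q^\top=\bigl[B^\top,(AB)^\top,\dots\bigr]^\top$ has exactly the form of $O$ under the substitutions $C\mapsto B^\top$, $A\mapsto A^\top=A$, $K_1\mapsto K_2$ and $m\mapsto p$, so ${\rm cond}(Q)={\rm cond}(Q^\top)$ inherits the estimate with $m,K_1$ replaced by $p,K_2$.

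Finally, for \eqref{L} I would use $H=OQ$ with $O$ of full column rank and $Q$ of full row rank (both of rank $n$), so that the multiplicative singular-value inequality $\sigma_{i+j-1}(XY)\le\sigma_i(X)\sigma_j(Y)$ gives $\sigma_n(H)\le\min\{\sigma_{\min}(O)\|Q\|,\ \|O\|\sigma_{\min}(Q)\}$. Writing $\sigma_{\min}(O)=\|O\|/{\rm cond}(O)$ and inserting \eqref{O and Q} converts each condition number into a decay factor $\rho^{-\lfloor\frac{n-1}{2m}\rfloor/\log(2mK_1)}$ (respectively with $p,K_2$), and retaining the smaller of the two products yields the $\max$ in the exponent. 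The remaining factors are routine: stability gives $|\lambda_j|\le1$, whence $\|O\|\le\|O\|_{\rm F}\le\sqrt{K_1 nm}\,\max_{i,j}|c_{ij}|$ and $\|Q\|\le\sqrt{K_2 np}\,\max_{i,j}|b_{ij}|$; multiplying, using the definition of $\overline{\bm{\delta}}$, and applying $\sqrt{K_1K_2}\le(K_1+K_2)/2=K/2$ collapses the constants into $2\overline{\bm{\delta}}\,nK\sqrt{pm}$, as claimed. This part is purely mechanical once the conditioning estimates are in hand, so the whole difficulty is concentrated in the uniform lower bound on $\sigma_{\min}(O)$.
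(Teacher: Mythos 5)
Your outer assembly coincides with the paper's proof: the estimate for $Q$ is obtained from the same core lemma (note $Q=[B,\,AB,\,\dots]$ is already a block Krylov matrix, while $O$ is one after transposition), and \eqref{L} is derived exactly as you describe, via $\sigma_n(OQ)\le\min\{\sigma_n(O)\|Q\|,\ \|O\|\sigma_n(Q)\}$, the stability-based Frobenius bounds $\|O\|_{\rm F}^2\le \overline{c}^{\,2}mnK_1$, $\|Q\|_{\rm F}^2\le \overline{b}^{\,2}pnK_2$, and $\sqrt{K_1K_2}\le K/2$. Those steps are correct but, as you say yourself, mechanical.

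The genuine gap is in the part you defer to ``the classical Vandermonde phenomenon,'' which is the entire content of the lemma. Two concrete problems. First, a direction mismatch: for the tall matrix $O\in\mathbb{R}^{mK_1\times n}$, the unit vector witnessing $\sigma_n(O)$ lives in $\mathbb{R}^n$, indexed by the poles, and $Oa$ is a vector of \emph{weighted power sums} $\sum_j a_j c_{ij}\lambda_j^k$ --- no polynomial is being evaluated at the nodes. A polynomial with unit coefficient norm that is small at every $\lambda_j$ instead supplies a test vector $b\in\mathbb{R}^{mK_1}$ in the row-index space, and $\|O^\top b\|$ being small implies $\sigma_n(O)$ is small only if $b$ has a non-negligible projection onto ${\rm Range}(O)$; otherwise $O^\top b$ is small merely because $b$ is close to the null space of $O^\top$, which says nothing about conditioning. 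You never address this projection issue, and it is precisely what forces the literature to work with Gram/moment (Hankel) matrices or with displacement structure rather than with naive evaluation test vectors. Second, the first half of the lemma assumes \emph{no} stability --- the poles are arbitrary distinct reals --- so Chebyshev-type extremal estimates tied to an interval cannot give the required uniform bound, and your ``bundling'' of $m$ channels to turn the exponent from order $n$ into $\lfloor(n-1)/(2m)\rfloor$ is asserted without any mechanism. The paper settles all of this in its appendix lemma on matrices $X=[W,\,DW,\,\dots,\,D^{m-1}W]$: such a matrix satisfies a Sylvester equation $DX-XP=(\text{term of rank at most the number of channels})$, where $P$ is a permutation-type matrix whose spectrum consists of shifted roots of unity and hence (for even block size) is disjoint from the real spectrum of $D$; the Beckermann--Townsend theory of singular values of matrices with displacement structure then gives $\sigma_{\min}(X)\le 4\rho^{-\lfloor\cdot\rfloor/\log(\cdot)}\|X\|$, with a Cauchy-interlacing argument covering odd block sizes. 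In particular, $\rho=e^{\pi^2/4}$ and the $\log(2mK_1)$ factor do not come from a Chebyshev construction but from Zolotarev-number (rational approximation/condenser capacity) estimates for the real line against off-axis roots of unity. Without this ingredient, or an equivalent one, your proof of \eqref{O and Q} does not go through.
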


In the following two subsections, we shall apply Lemma~\ref{Ho-Kalman is ill-conditioned} to analyze the finite \syl{sample} performance of the Ho-Kalman algorithm and MOESP algorithm.

\subsection{Analysis of Ho-Kalman Algorithm}
The Ho-Kalman algorithm~\cite{ho1966effective, oymak2021revisiting} is widely used to derive a state-space realization of LTI systems. \ssun{Before introducing the Ho-Kalman algorithm, we first need to define the matrix $G$, consisting of the first $K$ Markov parameter matrices:
\[
	G \triangleq \begin{bmatrix}
		\mathbf{0}_{m \times p} & CB & CAB & \cdots & CA^{K-2}B
	\end{bmatrix} \in \mathbb{R}^{m \times Kp},
\]
which can be estimated using the following least-squares procedure outlined in~\cite{oymak2021revisiting}. }

\ssun{Given the single-trajectory $\{(u_k,y_k)\mid 0 \leq k \leq T-1\}$, we generate $\overline{T}$ subsequences of length $K$, where $T = K + \overline{T} -1$ and $\overline{T} \geq 1$. Define the stacked input $\overline{u}_k \in \mathbb{R}^{Kp}$ as
\[
\overline{u}_k \triangleq  \begin{bmatrix}
	u_{k}^\top &u_{k-1}^\top & \cdots & u_{k-K+1}^\top
\end{bmatrix}^\top.
\]
Let the output matrix \( Y \) and input matrix \( U \) be defined as
\begin{align}\notag
	Y &\triangleq \begin{bmatrix}
		y_{K-1} & y_{K} & \cdots & y_{T-1}
	\end{bmatrix}^\top \in \mathbb{R}^{\overline{T} \times m}, \\\notag
	U &\triangleq \begin{bmatrix}
		\overline{u}_{K-1} & \overline{u}_{K} & \cdots &\overline{u}_{T-1}
	\end{bmatrix}^\top \in \mathbb{R}^{\overline{T} \times Kp}.
\end{align}
The estimation of $G$ can be derived from solving the least-squares problem
\[
\widehat{G} = \arg \min_{X \in \mathbb{R}^{m \times K p}} \| Y - U X^\top \|_{\rm F}^2,
\]
with closed-form solution \( \widehat{G} = (U^\dagger Y)^\top \), where \( U^\dagger = (U^\top U)^{-1} U^\top \) is the left pseudo-inverse of the matrix \( U \).}

\ssun{The Ho-Kalman algorithm is applied to the estimated Markov parameter matrix\footnote{The hat operator $\left(\widehat{\cdot}\right)$ is used to denote an estimated value of the corresponding matrix.} $\widehat{G}$, in accordance with the procedure described in~\cite{oymak2021revisiting}. Its main steps are summarized below.}

\textbf{Step 1}: \ssun{Construct the estimated Hankel matrix $\widehat{H}$ from $\widehat{G}$. Let $ \widehat{H}^+ $ and $ \widehat{H}^- $ be the sub-matrices of $ \widehat{H} $ discarding the left-most and right-most $ mK_1 \times p $ block respectively. Denote $\widehat{L}$ as the best rank-$n$ approximation of $ \widehat{H}^{-}$.}

\textbf{Step 2}: Compute the singular value decomposition (SVD) of the matrix 
\[
\widehat{L} = \mathcal{U}_1 \Sigma_1 \mathcal{V}^{\rm H}_1,
\]
where $\Sigma_1$ is a diagonal matrix, with its elements being the first $n$ largest singular values of matrix $\widehat{L}$, sorted in descending order, and $\mathcal{U}_1$ and $\mathcal{V}_1$ are matrices composed of corresponding left and right singular vectors, respectively. Indeed, $\mathcal{U}_1\Sigma_1\mathcal{V}_1^{\rm H}$ is the best rank-$n$ approximation (or truncated SVD) of $\widehat{H}^{-}$. 

\textbf{Step 3}: Estimate the (extended) observability and controllability matrices $O$ and $Q$ from the SVD of matrix $\widehat{L}$:
\[
\widehat{O} = \mathcal{U}_1 \Sigma_1^{\frac{1}{2}}, \quad \widehat{Q} = \Sigma_1^{\frac{1}{2}}\mathcal{V}_1^{\rm H}.
\]

\textbf{Step 4}: Construct estimates of the state-space matrices as follows
\[
\widehat{C} = \widehat{O}(1:m,:), \quad \widehat{B} = \widehat{Q}(:,1:p), \quad
\widehat{A} = \widehat{O}^\dagger \widehat{H}^{+} \widehat{Q}^\dagger,
\]
where $\widehat{O}(1:m,:)$ is the top-most $ m \times n $ block of $ \widehat{O}$, and $\widehat{Q}(:,1:p)$ is the left-most $ n \times p $ block of $ \widehat{Q}$.

 \ssun{Let $ H^+ $ and $ H^- $ be the sub-matrices of $ H $ discarding the left-most and right-most $ mK_1 \times p $ block respectively. Denote $L$ as the best rank-$n$ approximation of $H^{-}$. Applying the same procedure to the true Markov matrix $G$ results in a true balanced state-space realization $(\overline{A},\overline{B},\overline{C})$ of the system~\eqref{linear_system0}}.

For the Ho-Kalman algorithm above, Oymak and Ozay~\cite{oymak2021revisiting} \syl{establish an end-to-end estimation guarantee on the state-space matrices using a single input/output sample trajectory. This is formalized in the following theorem, which depends on the true parameters of the system}.
\begin{theorem} [\cite{oymak2021revisiting}] \label{ho-kalman algorithm}
	\syl{For the system~\eqref{linear_system0}, consider the setups of Theorem 3.1, 5.2 and 5.3 in~\cite{oymak2021revisiting}.
		Let $T_0 = Kq\log^2(Kq)$, where $q = p+m +n$.
		Suppose\footnote{Since $ L $ is the best rank-$n$ approximation of $H^{-}$, this means that $\sigma_{n}(H^-) = \sigma_{\min}(L)$, thus here we can use $ \sigma_{n}(H^-) $ instead of $\sigma_{\min}(L)$ in \cite{oymak2021revisiting}.} $\sigma_{n}(H^-) >0$ and perturbation obeys
		\begin{equation}\label{perturbation_condition}
			\|L-\widehat{L}\| \leq \frac{ \sigma_{n}(H^-)}{2},
		\end{equation}
		and sample size parameter $T$ obeys
		\begin{equation}
			\frac{T}{\log^2 (Tq)} \sim \Omega \left\{\frac{KT_0}{\sigma_{n}^2(H^-)} \right\}.
		\end{equation}
		Then, with high probability (same as Theorem 3.1 in~\cite{oymak2021revisiting}), there exists a unitary matrix $\mathcal{U} \in \mathbb{R}^{n \times n}$, such that
		\begin{multline}\label{upper_bound_1}
			\max \left\{\|\overline{C}-\widehat{C} \mathcal{U}\|_{\rm F}, \|O-\widehat{O} \mathcal{U}\|_{\rm F}, \|\overline{B}-\mathcal{U}^{\rm H} \widehat{B} \|_{\rm F}, \right. \\ \left.
			\| Q-\mathcal{U}^{\rm H} \widehat{Q} \|_{\rm F} \right\}
			\leq \frac{\sqrt{\mathcal{C}nK}\log(Tq)}{\sqrt{\sigma_{n}(H^-)}} \sqrt{\frac{T_0}{T}},
		\end{multline}
		where $\mathcal{C}$ is a constant.
		Furthermore, hidden state matrices $\widehat{A}, \overline{A}$ satisfy
		\begin{equation}\label{upper_bound_2}
			\|\overline{A}-\mathcal{U}^{\rm H} \widehat{A} \mathcal{U}\|_{\rm F} \leq \frac{\mathcal{C}\sqrt{nK}\log(Tq)\|H\|}{\sigma_{n}^2(H^-)}  \sqrt{\frac{T_0}{T}}.
	\end{equation}}
\end{theorem}

\response{
	According to the above theorem, if the system is stable (or marginally stable), then to ensure a reasonable system identification performance, the error $\|L - \widehat L\|$ should be comparable with respect to $\sigma_n(H^-)$. }

\response{
	Using essentially the same argument of Lemma~\ref{Ho-Kalman is ill-conditioned}, one can show that, if the system \eqref{linear_system0} is stable (or marginally stable), then the following inequality holds:
	\begin{equation}\label{L2}
		\sigma_{n}(\ssun{H^-}) \leq 2\overline{\bm{\delta}} nK\sqrt{pm} \rho^{-\max \left\{
			\frac{\left\lfloor \frac{n-1}{2m} \right\rfloor}{\log (2m(K_1-1))}, \frac{\left\lfloor \frac{n-1}{2p} \right\rfloor}{\log (2pK_2)}	
			\right\} },
	\end{equation}
	where $K=K_1+K_2$ and $\rho=e^{\pi^2/4}\approx 11.79$. As a result, it can be seen that $\sigma(\ssun{H^-})$ decays superpolynomially with respect to the largest number between $n/m$ or $n/p$. 
}

\syl{Oymak and Ozay in~\cite{oymak2021revisiting} point out that, ignoring logarithmic factors, when $T_0 \sim \Theta \{Kq\}$, to achieve constant estimation error with high probability, the sample size parameter $T$ need to satisfy $T \sim \Omega \{K^2 qn\}$. However, based on Theorem~\ref{ho-kalman algorithm} and~\eqref{L2}, it can be concluded that, in fact, achieving constant estimation error accuracy \ssun{with high probability for matrices} $\widehat{A}, \overline{A}$ requires that the sample size parameter $T$ satisfy 
	\begin{equation}
		T \sim \Omega \left\{ \frac{q}{npm} \varrho^{\max \left\{\frac{\left\lfloor \frac{n-1}{2m} \right\rfloor}{\log (2m(K_1-1))}, \frac{\left\lfloor \frac{n-1}{2p} \right\rfloor}{\log (2pK_2)}\right\}}  \right\},
	\end{equation}
	where $\varrho \triangleq e^{\pi^2} \approx 19333.69$ and $q = p+m+n$. Note that in~\cite{oymak2021revisiting}, the length of input/output sample trajectory satisfies that $T_{\rm total} = T + K -1$. This implies that ignoring logarithmic factors, achieving constant estimation error accuracy \ssun{with high probability for matrices} $\widehat{A}, \overline{A}$ requires the sample length $T_{\rm total}$ to grow superpolynomially with respect to \ssun{$\max\{n/p, n/m\}$.}
}

On the other hand, if $N$ independent sample trajectories $ \left\{(u_k^{(\ell)},y_{k}^{(\ell)}) \mid \ell = 1,2,\cdots,N; k = 0,1,\cdots, T_{\rm total}  \right\} $ are collected, then the convergence of estimated Markov parameters to the true Markov parameters is governed by the central limit theorem. Hence, the variance of $\|L-\widehat L\|$ is $\mathcal{O}\{1/{N}\}$. \ssun{Based on Theorem V.2 in~\cite{oymak2021revisiting} and~\eqref{L2}, it can be concluded that, in fact, achieving constant estimation error accuracy with high probability for matrices $\widehat{A}, \overline{A}$ requires that the number of trajectories satisfy 
\begin{equation}
	N \sim \Omega \left\{ \frac{1}{pm n^2 T_{\rm total}^2} \varrho^{\max \left\{\frac{\left\lfloor \frac{n-1}{2m} \right\rfloor}{\log (2m(K_1-1))}, \frac{\left\lfloor \frac{n-1}{2p} \right\rfloor}{\log (2pK_2)}\right\}}  \right\},
\end{equation}
which demonstrates that the required number of trajectories $N$ grows superpolynomially with respect to $\max\{n/p, n/m\}$.}

\subsection{Analysis of MOESP Algorithm}
In this subsection, we analyze the finite sample performance of the widely-used MOESP algorithm~\cite{verhaegen1992subspace}. Before introducing the MOESP algorithm, we need to give the definitions of some symbols. First, we define the Hankel matrix of past inputs and future inputs as follows
	\begin{equation}\label{U_p}
		U_p \triangleq \begin{bmatrix}
			u_0 & \cdots & u_{K_2 -1} \\
			\vdots & \ddots & \vdots \\
			u_{K_1 - 1} & \cdots & u_{K_1+K_2-2}
		\end{bmatrix},
	\end{equation}
	\ssun{and
	\begin{equation}\label{U_f}
		U_f \triangleq \begin{bmatrix}
				u_{K_1} & \cdots & u_{K_1 + K_2 -1} \\
				\vdots & \ddots & \vdots \\
				u_{2K_1 - 1} & \cdots & u_{2K_1+K_2-2}
			\end{bmatrix}.
	\end{equation}
	The Hankel matrices of past and future outputs can be defined similarly as $ Y_p $ and $Y_f$, respectively. The Hankel matrix of past data is defined as $ Z_p \triangleq \begin{bmatrix}
		U_p^\top & Y_p^\top
	\end{bmatrix}^\top $. }
			
We briefly summarize the main steps of the MOESP algorithm\footnote{Here we only give the MOESP algorithm for a single sample trajectory. For multiple trajectories, Laurent et al.~\cite{duchesne1995subspace} describe how to execute it.} as follows.

\textbf{Step 1}: \ssun{Construct the Hankel matrices $U_f$, $Y_f$ and $Z_p$}  based on finite input/output sample data.

\textbf{Step 2}: \ssun{Perform LQ decomposition as  follows:
	\begin{equation}	 \begin{bmatrix}
			U_f \\ Z_p \\ Y_f
		\end{bmatrix} 
		= \begin{bmatrix}
			L_{11} & \mathbf{0} &  \mathbf{0}\\
			L_{21} & L_{22} &  \mathbf{0}\\
			L_{32} & L_{32} &  L_{33} 
		\end{bmatrix}
		\begin{bmatrix}
			Q_1^\top \\ Q_2^\top \\ Q_3^\top
		\end{bmatrix}.
\end{equation}}

\textbf{Step 3}: Estimate the (extended) observability matrix $O$ from SVD of \ssun{$\frac{1}{\sqrt{K_2}}L_{32}$}:
\begin{equation}
	\widehat{O} = \mathcal{U}_2 \Sigma_2^{\frac{1}{2}},
\end{equation}
where $\mathcal{U}_2 \Sigma_2 \mathcal{V}_2^{\rm H}$ is the truncated SVD of \ssun{$\frac{1}{\sqrt{K_2}}L_{32}$}. \ssun{Here,} $\Sigma_2$ is a diagonal matrix, with its elements being the first $n$ largest singular values of matrix \ssun{$\frac{1}{\sqrt{K_2}}L_{32}$}, sorted in descending order, and $\mathcal{U}_2$ and $\mathcal{V}_2$ are matrices composed of corresponding left and right singular vectors, respectively.

\textbf{Step 4}: Compute estimates of the state-space matrices as follows
\begin{equation}
	\widehat{C} = \widehat{O}(1:m,:), \quad
	\widehat{A} = \underline{\widehat{O}}^\dagger \overline{\widehat{O}},
\end{equation}
\response{where $\underline{\widehat{O}}$ and $\overline{\widehat{O}}$ are the sub-matrices of $\widehat{O}$ discarding the bottom-most and top-most $m\times n$ block respectively.} Compute least square estimates of matrix $B$ using \sun{equation}\footnote{Due to the large number of linear equations and symbols involved in solving the least squares method of $B$, limited by space constraints, no specific equations and symbols are given here, and more details can be found in~\cite{tangirala2018principles}.} (23.120) in~\cite{tangirala2018principles}.

\ssun{The following theorem provides the numerical perturbation results for the estimation of matrix $A$ in the MOESP algorithm}.

\begin{theorem}\label{moesp}
	\ssun{For the system~\eqref{linear_system0}, let $O$ be the (extended) observability matrix, and $\widehat{O}$ be the estimated (extended) observability matrix obtained from Step 3 of the MOESP algorithm.  Suppose that ${\rm rank} [\underline{O}] = {\rm rank} [\underline{\widehat{O}}] = n$, 
	and that $ \| \Delta_{O}\| < \sigma_{\min}(\underline{O})$, where $\underline{O}$ is the sub-matrix of $\widehat{O}$ obtained by discarding the bottom-most $m \times n$ block, and $\Delta_{O} = \widehat{O} - O$. Then, 
	\begin{equation}
		\| \widehat{A} - \overline{A}
		\| \leq 
		\frac{\cond(\underline{O} )  \| \Delta_{O} \|}{ \| \underline{O} \| -  \cond(\underline{O} ) \| \Delta_{O} \|}
		\left( 1 + \| \overline{A}\|
		\right).
	\end{equation}}
\end{theorem}

\ssun{According to Lemma~\ref{Ho-Kalman is ill-conditioned}, the condition number of $O$ increases at a superpolynomial rate with respect to $n/m$. Combining this with the results from Theorem~\ref{moesp}, it can be concluded that when $n/m$ is large, applying the MOESP algorithm to identify matrix $A$ may result in significant numerical errors. In this sense, the MOESP algorithm can become ill-conditioned. On the other hand, if we perform a perturbation analysis on $L_{32}$, which is  defined in Step 2 of the MOESP algorithm, similar to the perturbation analysis of $L$ in the Ho-Kalman algorithm as presented in Theorem~\ref{ho-kalman algorithm}, we can obtain identification error analysis results for matrices $A$, $B$, and $C$ that are analogous to those in the Ho-Kalman case. Specifically, ignoring logarithmic factors, achieving constant estimation error accuracy with high probability for matrices $\widehat{A}, \overline{A}$ requires the sample length $T_{\rm total} = 2K_1 + K_2 -1$ to grow superpolynomially with respect to $n/m$. }

\begin{remark}
	\ssun{Lemma~\ref{Ho-Kalman is ill-conditioned} establishes that the condition numbers of the observability matrix $O$ and the controllability matrix $Q$ grow at a superpolynomial rate with respect to $n/m$ and $n/p$, respectively, while the smallest singular value of the Hankel matrix $H$ decays at a superpolynomial rate with respect to $\max\{n/m,\, n/p\}$. The numerical ill-conditioning of these matrices explains, from a numerical computation perspective, why the Ho-Kalman and MOESP algorithms tend to exhibit ill-conditioning when $n/m$ or $n/p$ is large. It should be noted that the results presented in Section~\ref{sec:classical algorithms} do not constitute a lower bound on the sample complexity and that a detailed analysis of the sample complexity is provided in Section~\ref{sec:sample complexity}.}
\end{remark}


\section{Analysis of Fisher Information matrix}\label{sec:fisher information matrix}

In the previous section, we prove that the widely-used Ho-Kalman algorithm and MOESP algorithm are ill-conditioned for MIMO systems when $n/m $ or $n/p$ is large.
One may wonder if the ill-conditionedness is due to the specific implementation of the Ho-Kalman algorithm and the MOESP algorithm, or is it fundamentally rooted in the identification problem itself.
In the following, we prove that the identification problem itself is inherently ill-conditioned in the sense that \textit{any} unbiased estimator will have an estimation error covariance that grows superpolynomially with respect to $n/(pm)$. To this end, we derive the Fisher Information matrix and further characterize its smallest eigenvalue, which shall be used to lower bound the sample complexity of \textit{any} unbiased identification algorithm in the next section via Cram\'er-Rao bound.

\subsection{Problem Formulation}
In order to make the identification problem well-defined, here we consider the following observable and controllable LTI system:
\begin{equation}\label{linear_system}
	\begin{aligned}
		x_{k+1} &= \underbrace{\begin{bmatrix}
				\lambda_1 & \\
				& \lambda_2 & \\
				& & \ddots \\
				& & & \lambda_n
		\end{bmatrix}}_{A} x_{k}+ 
		\underbrace{\begin{bmatrix}
				b_{11} & \cdots & b_{1p} \\
				b_{21} & \cdots & b_{2p} \\
				\vdots & \ddots & \vdots \\
				b_{n1} & \cdots & b_{np} \\
		\end{bmatrix}}_B u_{k}, \\
		y_{k} &= 
		\underbrace{\begin{bmatrix}
				c_{11} & c_{12} & \cdots & c_{1n} \\
				\vdots & \vdots & \ddots & \vdots \\
				c_{m1} & c_{m2} & \cdots & c_{mn} \\
		\end{bmatrix}}_{C} x_{k}+ v_{k},
	\end{aligned}
\end{equation}
\revise{which is the diagonal form of the system \eqref{linear_system0}. }

\begin{remark}
	It is worth noticing that there exists infinitely many equivalent state-space models of the system, which are compatible with input and output data and are similar to each other. Hence, To make the identification problem well-conditioned, we choose to identify the diagonal form in this section, since poles play an important role in various controller design methods, e.g., loop-shaping, root locus and so on. We plan to investigate the identifiability of other canonical form, such as the controllability standard form in the future.
	
\end{remark}

We further make the following assumptions besides Assumption~\ref{assumption1}:
\begin{assumption}\label{assumption2}
	\begin{enumerate}
		\item $ A $ is \textbf{unknown}, and \ssun{$ B,C $ are known}.
		\item All eigenvalues of $ A $ are stable (or marginally stable), i.e., all eigenvalues of $ A $ lie on the line segment $ [-1,1] $.
		\item For each trajectory, the initial condition of the state $x_0 = \mathbf{0}$ is known.
		\item Without loss of generality\footnote{One can scale $B$ to make the assumption hold.}, we assume that for each trajectory, the energy of input does not exceed $ 1 $ , i.e, $  \sum_{k=0}^{K-1} \| u_k^{(\ell)} \|^2 \leq 1$, where the superscript $\ell$ here represents the label of sample trajectories.
		\item \ssun{The measurement noise $ v_k $ obeys the standard Gaussian, i.e., $ \mathcal{R} = I_m $.}
		\item The length $K$ of each trajectory is the same. 
	\end{enumerate}
\end{assumption}

\begin{remark}
	Notice that \ssun{the assumption that $B,C$ as well as the initial condition $x_0=\mathbf{0}$ are known essentially makes the identification problem easier}. However, as shall be proved later, even if the identification algorithm has such knowledge, the identification problem is still ill-conditioned when $n/(pm)$ is large.  	
\end{remark}

The main goal of this section is to derive and bound the Fisher Information matrix of unknown poles $ \bm{\lambda} $ for MIMO systems, where $ \bm{\lambda} \triangleq \begin{bmatrix}
	\lambda_1 & \cdots & \lambda_n 
\end{bmatrix}^\top $, based on \textbf{finite} number of \textit{sample trajectories}, i.e., a sequence $ \left\{\left(u_k^{(\ell)},y_{k+1}^{(\ell)}\right) \right\}_{k=0}^{K-1} $ with $ \ell = 1,2,\cdots,N $, where $ N $ is the number of available sample trajectories and $K$ is the length of each trajectory.

\subsection{Fisher Information Matrix}
\modify{In this subsection, we derive the Fisher Information matrix $ \mathbf{I}(\bm{\lambda}) $ of unknown poles $ \bm{\lambda} $ of MIMO systems and further characterize its smallest eigenvalue}.

\begin{theorem}\label{fisher information matrix}
	For the system \eqref{linear_system}, under the condition that Assumption~\ref{assumption1} and \ref{assumption2} are satisfied, given one sample trajectory with length $ K $, 
	the Fisher Information matrix $ \mathbf{I}(\bm{\lambda}) $ of unknown poles $ \bm{\lambda} $ is
	\begin{equation}\label{I = SV}
		\mathbf{I}(\bm{\lambda}) =  V^\top S^\top S V,
	\end{equation}
	where 
	\begin{equation}\label{S}
		S = \begin{bmatrix}
			I_m \otimes u_0^\top &  &     \\
			\vdots & \ddots  &  \\
			I_m \otimes u_{K-1}^\top  & \cdots & 	I_m \otimes u_0^\top
		\end{bmatrix},
	\end{equation}
	and
	\begin{equation}\label{V}
		V = 
		\begin{bmatrix}
			0  & \cdots & 0 \\
			\vdots  & \ddots & \vdots \\
			0  & \cdots & 0\\
			c_{11}b_{11}  &\cdots & c_{1n}b_{n1} \\
			\vdots  & \ddots & \vdots \\
			c_{m1}b_{1p}  &\cdots & c_{mn}b_{np} \\
			\vdots  & \ddots & \vdots \\
			(K-1)c_{11}b_{11}\lambda_1^{K-2}  &\cdots & (K-1)c_{1n}b_{n1}\lambda_n^{K-2} \\
			\vdots  & \ddots & \vdots \\
			(K-1)c_{m1}b_{1p}\lambda_1^{K-2}  &\cdots & (K-1)c_{mn}b_{np}\lambda_n^{K-2} \\
		\end{bmatrix}.
	\end{equation}
	
\end{theorem}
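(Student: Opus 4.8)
The plan is to turn the Fisher information computation into a Jacobian computation. Since the process noise has been dropped and the initial condition $x_0=\mathbf{0}$ is known, the state recursion unrolls explicitly to $x_{k+1}=\sum_{j=0}^{k}A^{k-j}Bu_j$, so the measured output at time $k+1$ is Gaussian with the \emph{known} covariance $\mathcal{R}=I_m$ and mean $\mu_{k+1}(\bm{\lambda})=C\sum_{j=0}^{k}A^{k-j}Bu_j$. Stacking the means over the trajectory as $\mu=\begin{bmatrix}\mu_1^\top & \cdots & \mu_K^\top\end{bmatrix}^\top\in\mathbb{R}^{mK}$ and using that the $v_k$ are independent across time, the likelihood of one trajectory is $\mathcal{N}(\mu(\bm{\lambda}),I_{mK})$. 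For a Gaussian family whose covariance does not depend on the parameter, the standard Slepian--Bangs reduction gives $\mathbf{I}(\bm{\lambda})=J^\top J$, where $J\triangleq\partial\mu/\partial\bm{\lambda}\in\mathbb{R}^{mK\times n}$ is the Jacobian of the stacked mean. Hence the whole statement follows once we compute $J$ and exhibit the factorization $J=SV$.

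Next I would differentiate $\mu$ entrywise. Because Assumption~\ref{assumption1} makes $A=\operatorname{diag}(\lambda_1,\dots,\lambda_n)$ diagonal, we have $A^{k-j}=\operatorname{diag}(\lambda_i^{k-j})$ and each pole $\lambda_i$ enters only its own scalar channel, which decouples the derivative. Writing the $a$-th component as $(\mu_{k+1})_a=\sum_{j=0}^{k}\sum_{i=1}^{n}c_{ai}\lambda_i^{k-j}(Bu_j)_i$ and differentiating gives
\[
\frac{\partial(\mu_{k+1})_a}{\partial\lambda_i}=\sum_{j=0}^{k}(k-j)\,\lambda_i^{\,k-j-1}\,c_{ai}\,(Bu_j)_i=\sum_{t=0}^{k}t\,\lambda_i^{\,t-1}c_{ai}(Bu_{k-t})_i ,
\]
after the substitution $t=k-j$. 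This is a discrete convolution of the ``derivative-of-power'' coefficients $t\lambda_i^{t-1}$ against the input sequence, weighted by the gains $c_{ai}b_{ib}$.

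Finally I would read off the factorization by separating the convolution into an input-carrying factor and a parameter-carrying factor. The block-Toeplitz, lower-triangular matrix $S$ in \eqref{S}, whose $(r,c)$ block is $I_m\otimes u_{r-c}^\top$, carries the inputs and, when applied to $V$, contracts the $b$-th input component against the gain $b_{ib}$ to reconstruct $(Bu_{r-c})_i$, while the Kronecker factor $I_m$ selects the output channel $a$. The matrix $V$ in \eqref{V} carries, in its $c$-th block of $pm$ rows, the coefficients $(c-1)\lambda_i^{\,c-2}c_{ai}b_{ib}$ indexed by the pair $(a,b)$, so block row $c$ corresponds to the lag power $t=c-1$ (the first block vanishing since $t=0$). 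Multiplying, the $(a,i)$ entry of block row $r$ of $SV$ is $\sum_{c=1}^{r}(c-1)\lambda_i^{\,c-2}c_{ai}(Bu_{r-c})_i=\sum_{t=0}^{r-1}t\lambda_i^{\,t-1}c_{ai}(Bu_{r-1-t})_i$, which coincides exactly with the Jacobian entry computed above. Therefore $J=SV$ and $\mathbf{I}(\bm{\lambda})=J^\top J=V^\top S^\top SV$. I expect the only real difficulty to be the index bookkeeping in this last step: one must verify that the Kronecker blocks of $S$ select the correct output channel and contract the correct input channel, and that the power index in block $c$ of $V$ lines up with the input lag $r-c$, so that the Toeplitz product reproduces the convolution termwise; the analytic content (unrolling the recursion, the Gaussian Fisher-information reduction, and the diagonal decoupling) is routine.
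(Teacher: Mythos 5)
Your proposal is correct and follows essentially the same route as the paper: both reduce the problem to the Gaussian mean-parametrized Fisher information formula $\mathbf{I}(\bm{\lambda})=J^\top J$ (the paper cites Section 3.9 of Kay, you invoke the Slepian--Bangs reduction) and then establish the factorization $J=SV$ with exactly the matrices in \eqref{S} and \eqref{V}. The only cosmetic difference is that the paper obtains $J=SV$ structurally by the chain rule through the Markov parameters, taking $S=\partial\mu_K/\partial(\mathrm{vec}\,H_{0:K-1})$ and $V=\partial(\mathrm{vec}\,H_{0:K-1})/\partial\bm{\lambda}$, whereas you compute the Jacobian entrywise from the unrolled convolution and verify the factorization by index bookkeeping; both yield the identical decomposition.
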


\begin{corollary}\label{fisher information matrix with N sample}
	For the system \eqref{linear_system}, under the condition that Assumption~\ref{assumption1} and \ref{assumption2} are satisfied, given $ N $ sample trajectories with length $ K $, the Fisher Information matrix $ \mathbf{I}(\bm{\lambda}) $ of unknown poles $ \bm{\lambda} $ is
	\begin{equation}
		\mathbf{I}(\bm{\lambda}) =  \sum_{\ell=1}^N V^\top (S^{(\ell)})^\top S^{(\ell)} V,
	\end{equation}
	where the superscript $ \ell $ of the matrix $ S^{(\ell)} $ indicates that it is a matrix formed by the input on the $ \ell $-th sample trajectory in the form of \eqref{S}.
\end{corollary}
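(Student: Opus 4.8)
The plan is to exploit the additivity of Fisher Information across independent observations, thereby reducing the $N$-trajectory case to $N$ copies of the single-trajectory result already established in Theorem~\ref{fisher information matrix}. Since the measurement noise $v_k$ is i.i.d.\ and drawn independently across the trajectory index $\ell$, the $N$ sample trajectories are mutually independent once we condition on the (deterministic) input sequences. Consequently the joint likelihood factorizes as a product over $\ell$, and the joint log-likelihood becomes the sum $\log p = \sum_{\ell=1}^N \log p^{(\ell)}$ of the per-trajectory log-likelihoods.

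First I would write the score of the joint model as $\partial_{\bm{\lambda}} \log p = \sum_{\ell} \partial_{\bm{\lambda}} \log p^{(\ell)}$, then form $\mathbf{I}(\bm{\lambda}) = \mathbb{E}\!\left[(\partial_{\bm{\lambda}}\log p)(\partial_{\bm{\lambda}}\log p)^\top\right]$ and expand the outer product into the diagonal terms (the per-trajectory informations) and the off-diagonal cross terms. The cross terms $\mathbb{E}\!\left[(\partial_{\bm{\lambda}}\log p^{(\ell)})(\partial_{\bm{\lambda}}\log p^{(\ell')})^\top\right]$ with $\ell \neq \ell'$ factor into a product of expectations by independence, and each factor vanishes because the score has zero mean under the usual regularity conditions. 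Hence only the diagonal survives, giving $\mathbf{I}(\bm{\lambda}) = \sum_{\ell=1}^N \mathbf{I}^{(\ell)}(\bm{\lambda})$, where $\mathbf{I}^{(\ell)}$ is the single-trajectory Fisher Information associated with the $\ell$-th input sequence.

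Finally, I would invoke Theorem~\ref{fisher information matrix} to identify each summand. The crucial observation is that the matrix $V$ in~\eqref{V} depends only on the fixed quantities $B$, $C$, $\bm{\lambda}$, and the common trajectory length $K$, and is therefore identical across all trajectories, whereas $S^{(\ell)}$ in~\eqref{S} is built solely from the trajectory-specific inputs $u_k^{(\ell)}$. Theorem~\ref{fisher information matrix} then yields $\mathbf{I}^{(\ell)}(\bm{\lambda}) = V^\top (S^{(\ell)})^\top S^{(\ell)} V$, and substituting into the sum produces exactly the claimed expression. I expect no serious obstacle: the computation is a routine consequence of Theorem~\ref{fisher information matrix}, and the only point deserving explicit justification is the vanishing of the cross terms, which rests on the zero-mean property of the score together with the independence of the trajectories.
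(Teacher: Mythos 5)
Your proposal is correct and matches the paper's (implicit) reasoning: the paper states this corollary without proof precisely because it follows from Theorem~\ref{fisher information matrix} via the standard additivity of Fisher Information over independent trajectories, which is exactly your argument. Your observations that the cross terms vanish by the zero-mean score property and that $V$ is common to all trajectories while only $S^{(\ell)}$ varies are the right points to make explicit.
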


We have the following lemma to bound $(S^{(\ell)})^\top S^{(\ell)}$:
\begin{lemma}\label{bound of S}
	For each trajectory, if the energy of the input is no more than $ 1 $, then 
	\begin{equation}\label{SS}
		(S^{(\ell)})^\top S^{(\ell)} \preceq pmK^2 I_{pmK}.
	\end{equation}
	Further, if there are $ N $ sample trajectories, 
	we have
	\begin{equation}
		\mathbf{I}(\bm{\lambda}) \preceq  pmNK^2  V^\top  V,
	\end{equation}
\end{lemma}

\begin{proof}
	According to the definition of $ S^{(\ell)} $, we have
	\begin{equation}
		\|S^{(\ell)}\| \leq \sqrt{mK} \|S^{(\ell)}\|_{\infty}
		= \sqrt{mK}  \sum_{k=0}^{K-1} \|u_k^{(\ell)}\|_1 
		\leq  K\sqrt{pm},
	\end{equation}
	where the second inequality holds because of Cauchy-Schwartz inequality and that the energy of input of each sample trajectory is not more than $ 1 $. Hence,
	\begin{equation}
		(S^{(\ell)})^\top S^{(\ell)} \preceq pmK^2 I_{pmK},
	\end{equation}
	which completes the proof.
\end{proof}

\begin{remark}
	\ssun{
	It is important to note that the assumption in Assumption~\ref{assumption2}, which states that the total energy of the input does not exceed 1, may not be entirely natural. However, the current analysis framework can easily accommodate the case where the total input energy is linearly proportional to the trajectory length. Specifically, if the average power $P$ of the input is bounded, we can modify the total energy constraint from 1 to $PK$, where $K$ is the trajectory length. As a result, Inequality~\eqref{SS} in Lemma~\ref{bound of S} needs to be adjusted to $S^\top S \preceq pmPK^3 I_{pmK}$.
	Since the total energy only affects the sample complexity and other results polynomially, this adjustment does not alter the conclusion regarding the superpolynomial sample size requirement.}
\end{remark}

According to Lemma~\ref{bound of S}, it follows that the smallest eigenvalue of $ \mathbf{I}(\bm{\lambda})  $ satisfies 
\begin{equation}\label{upper bound of min I}
	\lambda_{\min}(\mathbf{I}(\bm{\lambda}))  \leq pmNK^2\sigma_{\min}^2(V).
\end{equation}
Thus, to show that $\mathbf{I}(\lambda)$ is ill-conditioned, we only need to focus on $\sigma_{\min}(V)$.
For the convenience of notation, we \textbf{use $ \bm{\kappa} $ to denote $ \bm{\frac{n}{pm}} $} in the following.

\begin{lemma}\label{singular value of V}
	\sun{Suppose that the matrix $ V $ defined in \eqref{V} is full column rank, then its smallest singular value satisfies that}
	\begin{equation}
		\sigma_{\min}^2(V)\leq 16 n^2 pm \overline{\bm{\delta}}^2 K^3  \rho^{-\frac{\lfloor \bm{\kappa} \rfloor -3}{\log (2K)}},
	\end{equation}
	where $ \rho = e^{\frac{\pi^2}{4}} $.
\end{lemma}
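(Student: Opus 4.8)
The plan is to bound $\sigma_{\min}(V)$ through its variational characterization $\sigma_{\min}(V) = \min_{\|z\|_2 = 1}\|Vz\|_2$ by exhibiting a single unit vector $z$ for which $\|Vz\|_2$ is superpolynomially small. The starting observation is that $V$ has a block-Vandermonde column structure: writing the rows of $V$ in $K$ consecutive blocks of height $pm$ indexed by $t = 0,1,\dots,K-1$, the $t$-th block of the $i$-th column equals $t\,\lambda_i^{\,t-1} d_i$, where $d_i \in \mathbb{R}^{pm}$ collects the products $c_{ri}b_{il}$ and therefore satisfies $\|d_i\|_2 \le \sqrt{pm}\,\overline{\bm{\delta}}$. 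Consequently the $t$-th block of $Vz$ is $t\sum_{i=1}^{n} z_i \lambda_i^{\,t-1} d_i$, i.e. a vector-valued ``moment'' of the discrete measure $\sum_i z_i\delta_{\lambda_i}$ weighted by the coefficient directions $d_i$. This is exactly the confluent (derivative) analogue of the observability matrix $O$ analysed in Lemma~\ref{Ho-Kalman is ill-conditioned}, with block size $pm$ in place of $m$.

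First I would annihilate the leading blocks. Requiring $\sum_{i=1}^{n} z_i \lambda_i^{\,s} d_i = \mathbf{0}$ for $s = 0,1,\dots,L-1$ is a homogeneous linear system with $pmL$ scalar equations in the $n$ unknowns $z_i$; whenever $pmL < n$, i.e. $L \le \lfloor\bm{\kappa}\rfloor$ up to the edge corrections that ultimately produce the ``$-3$'' in the exponent, it admits a nonzero solution, which I normalise to $\|z\|_2 = 1$ and which then satisfies $\|z\|_1 \le \sqrt{n}$. This choice forces the blocks $t = 1,\dots,L$ of $Vz$ to vanish, so that only the tail blocks $t = L+1,\dots,K-1$ contribute to $\|Vz\|_2$.

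Next I would bound the surviving blocks by a polynomial-approximation argument. Since the first $L$ moments vanish, $\sum_i z_i q(\lambda_i) d_i = \mathbf{0}$ for every polynomial $q$ of degree at most $L-1$; hence for each $t$ I may subtract the best degree-$(L-1)$ uniform approximant $q_t$ of $x^{t-1}$ on $[-1,1]$ and estimate
\[
\Big\|\,t\!\sum_i z_i \lambda_i^{\,t-1} d_i\Big\|_2 \le t\,\|z\|_1\,\max_i\|d_i\|_2\,\max_{x\in[-1,1]}\big|x^{t-1}-q_t(x)\big|,
\]
where $\max_i\|d_i\|_2 \le \sqrt{pm}\,\overline{\bm{\delta}}$ and $t \le K$. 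The decisive quantity is the best-approximation error, and I would control it by the same scaled-Chebyshev construction underlying Lemma~\ref{Ho-Kalman is ill-conditioned}, which (because the monomial powers involved range only up to $K$) yields a factor of the form $\rho^{-\Theta(L)/\log(2K)}$ with $\rho = e^{\pi^2/4}$; here the trajectory length $K$ enters the denominator through the range of monomials, while the multiplicity $pm$ has already been spent in fixing $L \approx \lfloor\bm{\kappa}\rfloor$. Squaring, summing the tail blocks via $\sum_{t<K} t^2 \le K^3$, and collecting the factors $n$ (from $\|z\|_1^2$), $pm\,\overline{\bm{\delta}}^2$ (from $\max_i\|d_i\|_2^2$) and the loose numerical constants then assembles the stated bound $\sigma_{\min}^2(V) \le 16\, n^2 pm\,\overline{\bm{\delta}}^2 K^3\, \rho^{-(\lfloor\bm{\kappa}\rfloor-3)/\log(2K)}$.

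The main obstacle is the polynomial-approximation step: producing the clean geometric factor $\rho^{-(\lfloor\bm{\kappa}\rfloor-3)/\log(2K)}$ uniformly over all admissible node configurations $\{\lambda_i\}\subset[-1,1]$ and all tail indices $t\le K$. A naive use of the exact best-approximation error of $x^{t-1}$ gives only a Gaussian-type factor $e^{-\Theta(L^2/K)}$, which degrades badly for large $K$; obtaining instead the $\log(2K)$-in-the-denominator form requires the particular Chebyshev estimate used in Lemma~\ref{Ho-Kalman is ill-conditioned}, and carrying the derivative weights $t\lambda_i^{\,t-1}$ and the rank-one coefficient vectors $d_i$ through that estimate is where the delicate bookkeeping — the constant in the exponent and the precise ``$-3$'' offset coming from the derivative shift together with the solvability margin $pmL<n$ — must be done carefully. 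The remaining ingredients (the dimension count guaranteeing a nonzero $z$, the norm bounds $\|z\|_1\le\sqrt n$ and $\|d_i\|_2\le\sqrt{pm}\,\overline{\bm{\delta}}$, and the final summation) are routine.
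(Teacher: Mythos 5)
Your structural reading of $V$ is sound (the zero first block, the blocks $t\lambda_i^{t-1}d_i$ with $\|d_i\|_2\le\sqrt{pm}\,\overline{\bm{\delta}}$, the dimension count $pmL<n$ producing an annihilating unit vector with $\|z\|_1\le\sqrt{n}$), and your route is genuinely different from the paper's. But the step you yourself flag as the ``main obstacle'' is not delicate bookkeeping --- it is a dead end. After annihilation, the quantity you must make superpolynomially small is the best \emph{uniform polynomial} approximation error $E_{L-1}(x^{t-1})$ of the monomial $x^{t-1}$ on $[-1,1]$ by polynomials of degree $L-1\approx\lfloor\bm{\kappa}\rfloor$, uniformly over $t\le K-1$. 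This quantity has a hard lower bound: the Chebyshev expansion of $x^{N}$ has coefficients $2^{1-N}\binom{N}{(N-k)/2}$, and since the $k$-th Chebyshev coefficient of $x^N-q$ equals that of $x^N$ whenever $k>\deg q$ and is at most $2\|x^N-q\|_\infty$ in magnitude, one gets $E_{m}(x^{N})\gtrsim 2^{-N}\binom{N}{\lceil (N-m-1)/2 \rceil}\gtrsim 1/\sqrt{N}$ whenever $m\lesssim\sqrt{N}$. Consequently, for any trajectory length $K\gg\lfloor\bm{\kappa}\rfloor^{2}$ the tail blocks with $t\gtrsim\lfloor\bm{\kappa}\rfloor^{2}$ contribute $\sum_t t^{2}E_{L-1}(x^{t-1})^{2}=\Omega(K^{2})$ to your estimate, which then carries no decaying factor at all; in the opposite regime $K\lesssim\lfloor\bm{\kappa}\rfloor^{2}$ you get at best the Gaussian factor $e^{-\Theta(\lfloor\bm{\kappa}\rfloor^{2}/K)}$ that you mention. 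Neither is the claimed $\rho^{-(\lfloor\bm{\kappa}\rfloor-3)/\log(2K)}$, and the difference is not cosmetic: the single-trajectory sample-complexity result (case (II) of Section~\ref{sec:sample complexity}) rests precisely on the $1/\log K$ form of the exponent, which your method cannot deliver for large $K$.

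The rescue you invoke --- ``the particular Chebyshev estimate used in Lemma~\ref{Ho-Kalman is ill-conditioned}'' --- does not exist in the form you need. Lemma~\ref{Ho-Kalman is ill-conditioned} is proved via Lemma~\ref{krylov matrix}, whose engine is the Beckermann--Townsend displacement-structure theory: a Sylvester equation whose two normal coefficient matrices have well-separated spectra (a real interval versus roots of unity), combined with bounds on Zolotarev numbers, i.e.\ on \emph{rational} approximation problems. The $\log(\cdot)$ in the denominator of all these exponents is a signature of rational approximation and, by the lower bound above, is provably unattainable through polynomial approximation of monomials. Your construction, however, only licenses subtracting polynomials of degree $<L$ (vanishing polynomial moments say nothing against a rational function of the $\lambda_i$), so the Zolotarev machinery cannot be grafted onto it. The paper sidesteps the issue entirely: after dropping the zero block and permuting rows, it factors $\widetilde V=(I_{pm}\otimes\Lambda_{K-1})(I_{pm}\otimes\mathbf V_{K-1})\Lambda$, where $\Lambda_{K-1}={\rm diag}(1,\dots,K-1)$ absorbs the derivative weights, $\mathbf V_{K-1}$ is the $(K-1)\times n$ Vandermonde matrix of the poles, and $\Lambda$ stacks the coefficient diagonals; a subspace-dimension argument (possible because $pm(\lfloor\bm{\kappa}\rfloor-1)<n$) yields $\sigma_{\min}^{2}(V)\le K^{2}\|\Lambda\|^{2}\,\lambda_{\lfloor\bm{\kappa}\rfloor}\bigl(\mathbf V_{K-1}\mathbf V_{K-1}^{\top}\bigr)$, and the superpolynomial decay then comes from Lemma~\ref{hankel matrix} applied to the positive semidefinite \emph{Hankel} matrix $\mathbf V_{K-1}\mathbf V_{K-1}^{\top}$, with the prefactor $16n^{2}pm\overline{\bm{\delta}}^{2}K^{3}$ assembled from $\|\Lambda_{K-1}\|^{2}\le K^{2}$, $\|\Lambda\|_{\rm F}^{2}\le npm\overline{\bm{\delta}}^{2}$ and $\|\mathbf V_{K-1}\mathbf V_{K-1}^{\top}\|\le nK$. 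A constructive argument in your style would, in effect, have to re-derive that Hankel singular-value decay; no Chebyshev polynomial estimate will substitute for it.
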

\modify{
	Further, we can get the following theorem describing the characterization of the smallest eigenvalue of the Fisher Information matrix $ \mathbf{I}(\bm{\lambda})$}.

\begin{theorem}\label{worst case}
	For the system \eqref{linear_system}, under the condition that Assumption~\ref{assumption1} and \ref{assumption2} are satisfied, based on $ N $ sample trajectories with length $ K $, \sun{suppose that the Fisher Information matrix $ \mathbf{I}(\bm{\lambda}) $ is full rank,} then its smallest eigenvalue satisfies that
	\begin{equation}\label{worst case upper bound}
		\lambda_{\min}\left(\mathbf{I}(\bm{\lambda})\right) \leq 
		16N n^2(pm)^2 \overline{\bm{\delta}}^2 K^5 \rho^{-\frac{\lfloor \bm{\kappa} \rfloor-3}{\log(2K)}},
	\end{equation}
	where $ \rho = e^{\frac{\pi^2}{4}} $.
\end{theorem}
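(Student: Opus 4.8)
The plan is to treat this theorem as an assembly of bounds that are already in place rather than as a fresh estimate, since all of the analytical difficulty has been pushed into Lemma~\ref{singular value of V}. The guiding observation is that the right-hand side of \eqref{worst case upper bound} is exactly the product of the prefactor $pmNK^2$ appearing in \eqref{upper bound of min I} with the bound on $\sigma_{\min}^2(V)$ supplied by Lemma~\ref{singular value of V}. Hence the whole argument reduces to chaining two inequalities and collecting constants.

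First I would recall from Corollary~\ref{fisher information matrix with N sample} that $\mathbf{I}(\bm{\lambda}) = \sum_{\ell=1}^N V^\top (S^{(\ell)})^\top S^{(\ell)} V$, and apply Lemma~\ref{bound of S} to obtain the Loewner ordering $\mathbf{I}(\bm{\lambda}) \preceq pmNK^2\, V^\top V$. Since eigenvalues are monotone with respect to the positive semidefinite order (via the Courant--Fischer min-max characterization), this yields
\[
	\lambda_{\min}\!\left(\mathbf{I}(\bm{\lambda})\right) \leq pmNK^2\,\lambda_{\min}\!\left(V^\top V\right) = pmNK^2\,\sigma_{\min}^2(V),
\]
which is precisely \eqref{upper bound of min I}. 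Before invoking Lemma~\ref{singular value of V}, I would verify that its hypothesis is met: were $V$ not full column rank, any nonzero $x$ with $Vx=\mathbf{0}$ would satisfy $\mathbf{I}(\bm{\lambda})\,x=\mathbf{0}$, contradicting the standing assumption that $\mathbf{I}(\bm{\lambda})$ is full rank. Thus $V$ is full column rank and the lemma legitimately applies.

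The final step is pure bookkeeping: substituting the bound $\sigma_{\min}^2(V)\leq 16\, n^2 pm\, \overline{\bm{\delta}}^2 K^3\, \rho^{-(\lfloor \bm{\kappa}\rfloor-3)/\log(2K)}$ of Lemma~\ref{singular value of V} into the displayed inequality and multiplying the constants $pmNK^2 \cdot 16\, n^2 pm\, \overline{\bm{\delta}}^2 K^3 = 16\, N n^2 (pm)^2 \overline{\bm{\delta}}^2 K^5$ reproduces \eqref{worst case upper bound} exactly, with the $\rho$-factor carried through unchanged. There is no genuine obstacle at this level; the only care required is to confirm that the full-rank assumption transfers from $\mathbf{I}(\bm{\lambda})$ to $V$ so that Lemma~\ref{singular value of V} is available, and to track the powers of $K$, $pm$ and $n$ correctly when merging the two prefactors. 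The real difficulty --- establishing the superpolynomial decay of $\sigma_{\min}(V)$ through a Vandermonde-type condition-number estimate governed by $\rho=e^{\pi^2/4}$ --- resides entirely in Lemma~\ref{singular value of V} and is assumed here.
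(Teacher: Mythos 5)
Your proof is correct and follows exactly the route the paper intends: the theorem is stated as an immediate consequence of chaining \eqref{upper bound of min I} (from Corollary~\ref{fisher information matrix with N sample} and Lemma~\ref{bound of S}) with Lemma~\ref{singular value of V}, and your constant bookkeeping $pmNK^2 \cdot 16\, n^2 pm\, \overline{\bm{\delta}}^2 K^3 = 16\, N n^2 (pm)^2 \overline{\bm{\delta}}^2 K^5$ matches \eqref{worst case upper bound}. Your explicit verification that full rank of $\mathbf{I}(\bm{\lambda})$ forces $V$ to be full column rank is a small but worthwhile addition that the paper leaves implicit.
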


Theorem \ref{worst case} above reveals that for MIMO systems, the smallest eigenvalue of the Fisher information matrix $\lambda_{\min}\left(\mathbf{I}(\bm{\lambda})\right) $ decreases at a superpolynomial rate with respect to $ \bm{\kappa} $ ($ \bm{\kappa} = \frac{n}{pm} $). In the next section, we shall leverage this result to prove that the system identification problem is ill-conditioned for any unbiased estimator.

\section{Analysis of Sample Complexity}\label{sec:sample complexity}
\modify{In this section, we derive a fundamental limit on identifying unknown poles $ \bm{\lambda} $ of MIMO systems via Cram\'er-Rao bound. Specifically, we prove that \textit{any} unbiased estimator will have an estimation error covariance that grows superpolynomially with respect to $ \bm{\kappa} $, in this sense that the identification problem itself is ill-conditioned. Moreover, we reveal that the sample complexity required to reach a certain level of accuracy for \textit{any} unbiased pole estimation algorithm explodes superpolynomially with respect to $ \bm{\kappa} $.}

\modify{
	\begin{theorem}[\ssun{Section 3.8 in~\cite{kay1993fundamentals}}][Cram\'er–Rao bound]\label{cramer-rao bound}
		It is assumed that the probability density function $ f_{\bm{\theta}}(\bm{x}) $ satisfies the regularity conditions
		\[
		\mathbb{E}_\theta \left[\frac{\partial \log f_{\bm{\theta}}(\bm{x})}{\partial \bm{\theta}} \right] = \mathbf{0}, \text{ for all } \bm{\theta},
		\]
		where the expectation is taken with respect to $ f_{\bm{\theta}}(\bm{x}) $. \textcolor{blue}{Let  $\widehat{\bm{\theta}}$ be an estimate of $ \bm{\theta}$ and denote $\mathbb E_\theta [\widehat{\bm{\theta}}]$ by $\psi( \bm{\theta}) $, then the following inequality holds:
        \begin{equation}
        	{\rm Cov}\left[\widehat{\bm{\theta}}\right] \succeq J\psi^\top \mathbf{I}^{-1}(\bm{\theta}) J\psi,
        \end{equation}
	where $J\psi$ is the Jacobian matrix of $\psi( \bm{\theta})$, i.e., the $ij$-th entry of $J\psi$ is given by $\partial \psi_i( \bm{\theta}))/\partial  \bm{\theta}_j$, and $ \mathbf{I}(\bm{\theta}) $ is the Fisher Information matrix of $ \bm{\theta} $. }
        
        Specifically, $\widehat{ \bm{\theta}}$ is unbiased, i.e., if $\psi( \bm{\theta})=  \bm{\theta}$, then the the covariance matrix satisfies
                \[
                {\rm Cov}[\widehat{\bm{\theta}}] \succeq \mathbf{I}^{-1}(\bm{\theta}).
		\]
	\end{theorem}
}

Based on Theorem \ref{worst case} and Theorem \ref{cramer-rao bound} above, the following theorem related to the sample complexity can be easily obtained.

\begin{theorem}\label{worst case-cramer-rao}
  For the system \eqref{linear_system}, under the condition that Assumption \ref{assumption1} and \ref{assumption2} are satisfied, based on $ N $ sample trajectories with length $ K $, suppose that the Fisher Information matrix $ \mathbf{I}(\bm{\lambda}) $ is full rank. \textcolor{blue}{Let $ \widehat{\bm{\lambda}} $ be an estimator (possibly biased) of unknown system poles $ \bm{\lambda}$. If the Jacobian of $\psi(\bm{\lambda})=\mathbb E_{\bm \lambda} [\widehat{\bm{\lambda}}]$ is uniformly bounded from below, then the largest eigenvalue of its covariance matrix satisfies}
	\begin{equation}\label{worst case-cramer-rao1}
		\begin{aligned}
			&\lambda_{\max}\left({\rm Cov}[\widehat{\bm{\lambda}}] \right)\\ 
			&\sim 	\Omega  \left\{
			\frac{1}{ N   \overline{\bm{\delta}}^2}
			\exp \left[ \frac{\pi^2 \bm{\kappa}}{4 \log K} -2 \log (npm)
			-5\log K 
			\right]
			\right\},
		\end{aligned}
	\end{equation}
	where $ \rho = e^{\frac{\pi^2}{4}} $. 
\end{theorem}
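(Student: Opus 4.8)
The plan is to chain together the Cram\'er--Rao bound (Theorem~\ref{cramer-rao bound}) and the upper bound on the smallest eigenvalue of the Fisher Information matrix (Theorem~\ref{worst case}). Since the substantive work has already been carried out in establishing Theorem~\ref{worst case}, this final statement should follow from a short eigenvalue argument together with an asymptotic simplification, so I expect it to be essentially a corollary rather than an independent result.

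First I would invoke Theorem~\ref{cramer-rao bound} to obtain the Loewner-order inequality ${\rm Cov}[\widehat{\bm{\lambda}}] \succeq \mathbf{I}(\bm{\lambda})^{-1}$; the regularity condition is satisfied here because the measurement model is Gaussian with a smooth (indeed polynomial) dependence of the mean on $\bm{\lambda}$, and the inverse $\mathbf{I}(\bm{\lambda})^{-1}$ is well-defined and positive definite by the full-rank assumption. Next I would pass from this matrix inequality to a scalar bound on the largest eigenvalue. Using the variational characterization $\lambda_{\max}(M) = \max_{\|x\|=1} x^\top M x$, the order $A \succeq B$ implies $\lambda_{\max}(A) \geq \lambda_{\max}(B)$; applied with $A = {\rm Cov}[\widehat{\bm{\lambda}}]$ and $B = \mathbf{I}(\bm{\lambda})^{-1}$ this gives $\lambda_{\max}({\rm Cov}[\widehat{\bm{\lambda}}]) \geq \lambda_{\max}(\mathbf{I}(\bm{\lambda})^{-1})$. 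Because the eigenvalues of the inverse of a positive-definite matrix are the reciprocals of the original eigenvalues, $\lambda_{\max}(\mathbf{I}(\bm{\lambda})^{-1}) = 1/\lambda_{\min}(\mathbf{I}(\bm{\lambda}))$. Substituting the upper bound on $\lambda_{\min}(\mathbf{I}(\bm{\lambda}))$ from Theorem~\ref{worst case} and inverting the inequality yields the first bound in \eqref{worst case-cramer-rao1}.

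Finally, for the asymptotic form I would take logarithms, write $\rho^{(\lfloor \bm{\kappa}\rfloor - 3)/\log(2K)} = \exp\!\left(\tfrac{\pi^2}{4}\cdot\tfrac{\lfloor\bm{\kappa}\rfloor - 3}{\log(2K)}\right)$, and absorb the polynomial prefactor $1/(16\,n^2(pm)^2 K^5)$ into the exponential as $-2\log n - 2\log(pm) - 5\log K$, with the constant $\log 16$ swallowed by $\Omega$, while keeping $1/(N\overline{\bm{\delta}}^2)$ as a separate factor. The only step requiring a word of care is the asymptotic replacement $\lfloor\bm{\kappa}\rfloor - 3 \sim \bm{\kappa}$ and $\log(2K) \sim \log K$, valid in the regime of large $\bm{\kappa}$ and large $K$, which produces the leading exponent $\tfrac{\pi^2\bm{\kappa}}{4\log K}$. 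I do not anticipate a genuine obstacle here: the monotonicity of $\lambda_{\max}$ under the Loewner order and the reciprocal relation between the eigenvalues of $\mathbf{I}(\bm{\lambda})$ and $\mathbf{I}(\bm{\lambda})^{-1}$ are both elementary, so the main point is simply making the $\Omega$ bookkeeping precise.
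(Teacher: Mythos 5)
Your proposal is correct and follows essentially the same route as the paper: chaining Theorem~\ref{cramer-rao bound} with Theorem~\ref{worst case} via $\lambda_{\max}\left({\rm Cov}[\widehat{\bm{\lambda}}]\right) \geq \lambda_{\max}\left(\mathbf{I}(\bm{\lambda})^{-1}\right) = 1/\lambda_{\min}\left(\mathbf{I}(\bm{\lambda})\right)$ and then substituting the bound \eqref{worst case upper bound}. The only difference is one of emphasis: the paper's proof is devoted entirely to the explicit verification of the Cram\'er--Rao regularity condition (writing the score as $\sum_{j}\left(\left[Y_{[1,K]}\right]_j - \left[\mu_K\right]_j\right)\frac{\partial}{\partial \lambda_i}\left[\mu_K\right]_j$ and showing its expectation vanishes by Gaussianity), which you dispatch in a single clause, while the eigenvalue monotonicity and $\Omega$ bookkeeping you spell out are left implicit in the paper.
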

\begin{remark}
  \textcolor{blue}{It is worth noticing that if the estimator $\widehat{\bm{\lambda}}$ is unbiased, then $\psi(\bm{\lambda})=\bm{\lambda}$ and the Jacobian matrix is identity. Therefore, to reduce the bias of the estimator, the Jacobian matrix should be close to identity. Moreover, the second moment of the estimation error is given by $\rm{Cov}[\widehat{\bm{\lambda}}] + (\bm{\lambda}-\psi(\bm{\lambda}))(\bm{\lambda}-\psi(\bm{\lambda}))^\top$, which implies that estimation error is large if the Fisher information matrix is ill-conditioned.
  }
\end{remark}

Now, we analyze the sample complexity for \textit{any} unbiased pole estimation algorithm, i.e., the number of samples needed to reach \modify{a certain level of accuracy}. 
Notice that according to Theorem \ref{worst case-cramer-rao}, the largest eigenvalue of the covariance matrix using \textit{any} unbiased estimator $ \widehat{\bm{\lambda}} $ depends on both the number $ N $ of sample trajectories  and the length $ K $ of each trajectory. In the following we shall consider two extreme cases:

\begin{enumerate}[(I).]
	\item Consider the case where the identification is conducted using \textbf{multiple trajectories with the shortest possible length\footnote{
			\sun{In fact, since the $n$ poles $\{\lambda_1, \cdots, \lambda_n\}$ are all unknown parameters of the system \eqref{linear_system}, the dimension of the output is $m$, and the first output $y_0$ of each sample trajectory contains only the information of the measurement noise, in order to make sense of this identification problem, the length $K$ of the sample trajectory is at least $\lceil n/m \rceil + 1$.} }$K = \lceil n/m\rceil + 1$}. 
	According to Theorem~\ref{worst case-cramer-rao}, \response{
		if the largest eigenvalue of the covariance matrix of unbiased estimator $ \widehat{\bm{\lambda}} $ of unknown poles is no more than a fixed constant $\epsilon > 0$, then the number $ N $ of sample trajectories must satisfy}
	\begin{equation}\label{minimize length worst}
		\small
			N \sim \Omega \left\{\frac{	\exp \left[ 
				\frac{\pi^2 \bm{\kappa}}{4 \log \frac{n}{m}} - 7\log n - 2\log (pm) + 5 \log m
				\right]}{\epsilon \overline{\bm{\delta}}^2}
			\right\}.
	\end{equation}
	\response{
		Note that when $ \bm{\kappa} $ is much larger than $ pm $, 
		we can further obtain that $ N \sim \Omega \left\{\frac{1}{\epsilon \overline{\bm{\delta}}^2} \bm{\kappa}^{-16}11.8^{\bm{\kappa}^{1-\varepsilon} } \right\} $, where $ \varepsilon > 0 $ is any small constant. Hence, the number $ N $ of trajectories required grows superpolynomially with respect to $ \bm{\kappa} $.}
	
	\item Consider the case where \textbf{only a single trajectory} is used. 
	\response{If the largest eigenvalue of the covariance matrix of unbiased estimator $ \widehat{\bm{\lambda}} $ of unknown poles is no more than a fixed constant $\epsilon > 0$, then} the length $ K $ of the trajectory must satisfy
	\begin{equation}\label{one trajectory_worst}
		\small
			K \sim \Omega \left\{
			\frac{	\exp \left[ \left( 5\pi^2  \bm{\kappa} + \log^2 \left( 16 \epsilon (\overline{\bm{\delta}} npm)^2
				\right)
				\right)^{0.5}/10 \right]}{
		\epsilon^{0.1} (\overline{\bm{\delta}} npm)^{0.2}} 
			\right\}.
	\end{equation}
	\response{
		Note that when $ \bm{\kappa} $ is much larger than $ pm $ and $16  \overline{\bm{\delta}}^2 n^2 (pm)^2 \epsilon \geq 1$, 
		the term in the brace of \eqref{one trajectory_worst} can be approximated by $\epsilon^{-0.1} \overline{\bm{\delta}}^{-0.2} \bm{\kappa}^{-0.6}2.02^{\sqrt{\bm{\kappa}} } $.
		Hence,  the length $ K $ of the trajectory required also grows superpolynomially with respect to $ \bm{\kappa}$.
	}
\end{enumerate}
\begin{remark}
	Notice that the lower bound on the sample complexity with a single ``long'' trajectory, albeit still superpolynomial with respect to $ \bm{\kappa} $, is significantly smaller than that of multiple ``short'' trajectories, which suggests that it may be beneficial to use a longer trajectory rather than multiple but shorter trajectories. However, it may be computationally challenging to use a single long trajectory, as both Ho-Kalman and MOESP algorithm requires factorization of matrix whose size is proportional to the length of the trajectory.
\end{remark}

\begin{remark}
	\response{
		It is worth noticing that the ill-conditionedness of the identification problem is caused by the bad numerical conditions of certain matrices (observability, controllability, Hankel and Fisher information matrix) {and} the stochastic noise presented in the sensory data. In essence, our results indicate that even very small amount of noise would greatly affect the identification results, and as a consequence, the sample complexity to get a reasonably accurate system model is very high since the noise level needs to be extremely small.  }
\end{remark}

\section{Numerical Results}\label{sec:numerical results}

\response{
	In this section, we first consider the identification of a discrete LTI system modeling a heat conduction process~\cite{mo2009network} in a planar closed region as
	\begin{equation}\label{heat conduction process}
		\frac{\partial \tau}{\partial t} = \alpha \left( \frac{\partial^2 \tau}{\partial x_1^2} + \frac{\partial^2 \tau}{\partial x_2^2}\right)
	\end{equation}
	with boundary conditions
	\begin{equation}
		\left. \frac{\partial \tau}{\partial x_1} \right|_{t,0,x_2} = 
		\left. \frac{\partial \tau}{\partial x_1} \right|_{t,l,x_2} =
		\left. \frac{\partial \tau}{\partial x_2} \right|_{t,x_1,0} =
		\left. \frac{\partial \tau}{\partial x_2} \right|_{t,x_1,l} =0,
	\end{equation}
	where $x_1$ and $x_2$ represent the coordinates of this planar region, which is a square with slide length $l$, and $\tau(t,x_1,x_2)$ denotes the temperature at time $t$ at location $(x_1,x_2)$ and the parameter $\alpha$ denotes the thermal conductivity. We use the finite difference method to discretize the PDE into a $4 \times 4$ grid with $1$Hz sample frequency. 
}

\begin{figure*}[htbp]
	\centering
	\subcaptionbox{\label{pde_id}}[.26\linewidth]
	{\includegraphics[width = 1\linewidth]{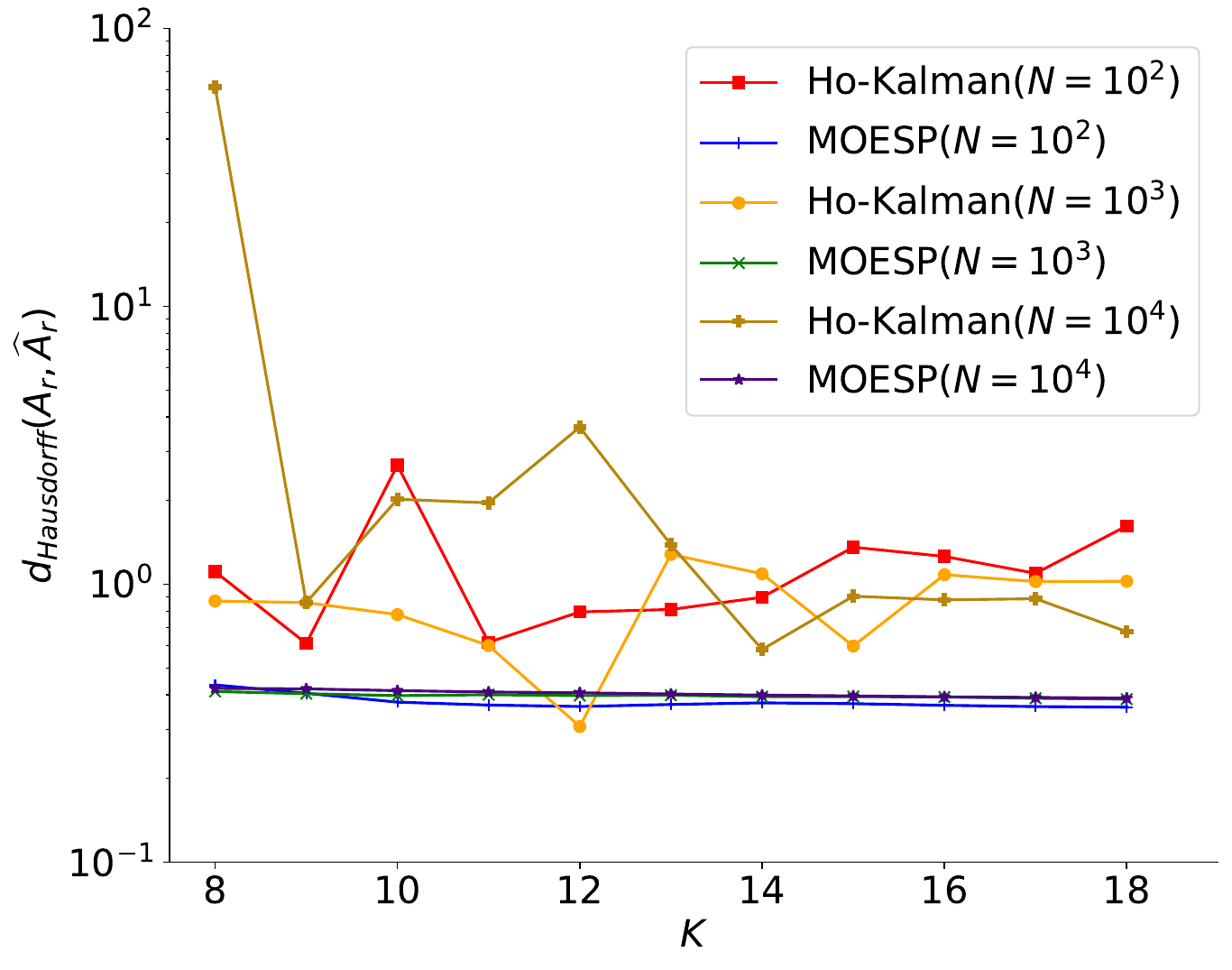} }
	\subcaptionbox{\label{pole_distribution}}[.22\linewidth]
	{\includegraphics[width = 1\linewidth]{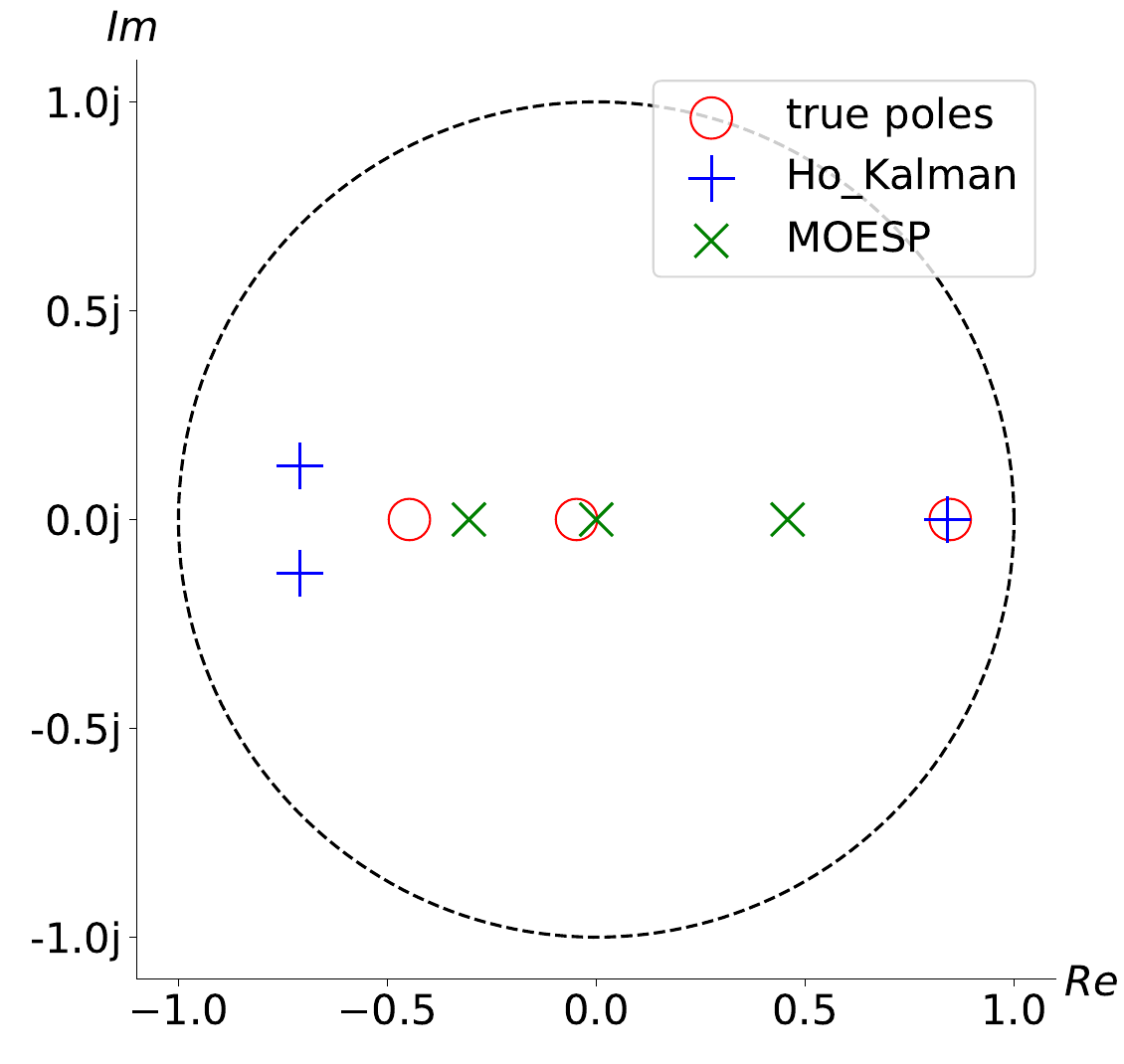} }
	\subcaptionbox{\label{pde_id_min_singular_value_H}}[.24\linewidth] {
		\includegraphics[width = 1\linewidth]{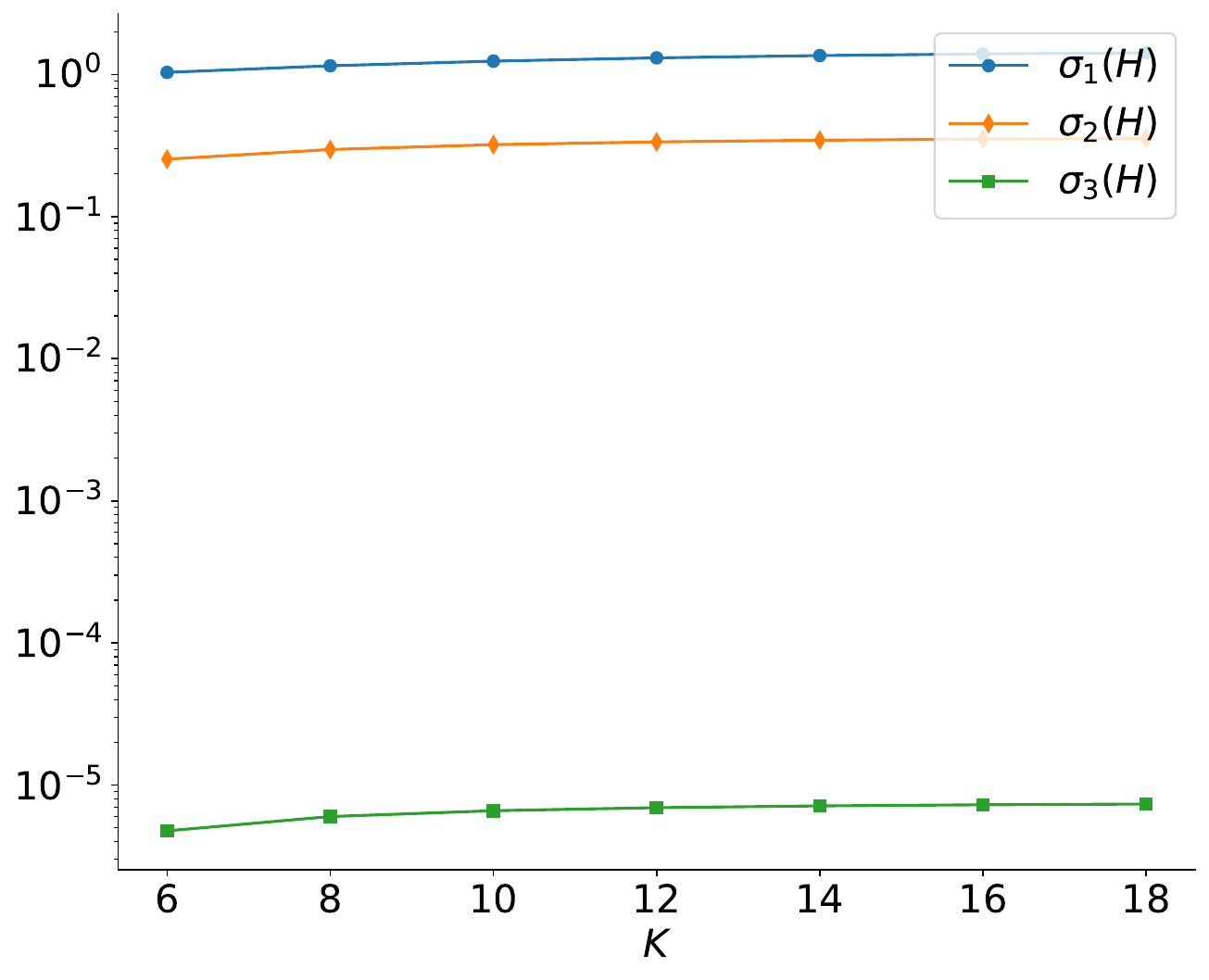}}
	\subcaptionbox{\label{pde_id_condition_number_O_Q}}[.24\linewidth] {
		\includegraphics[width = 1\linewidth]{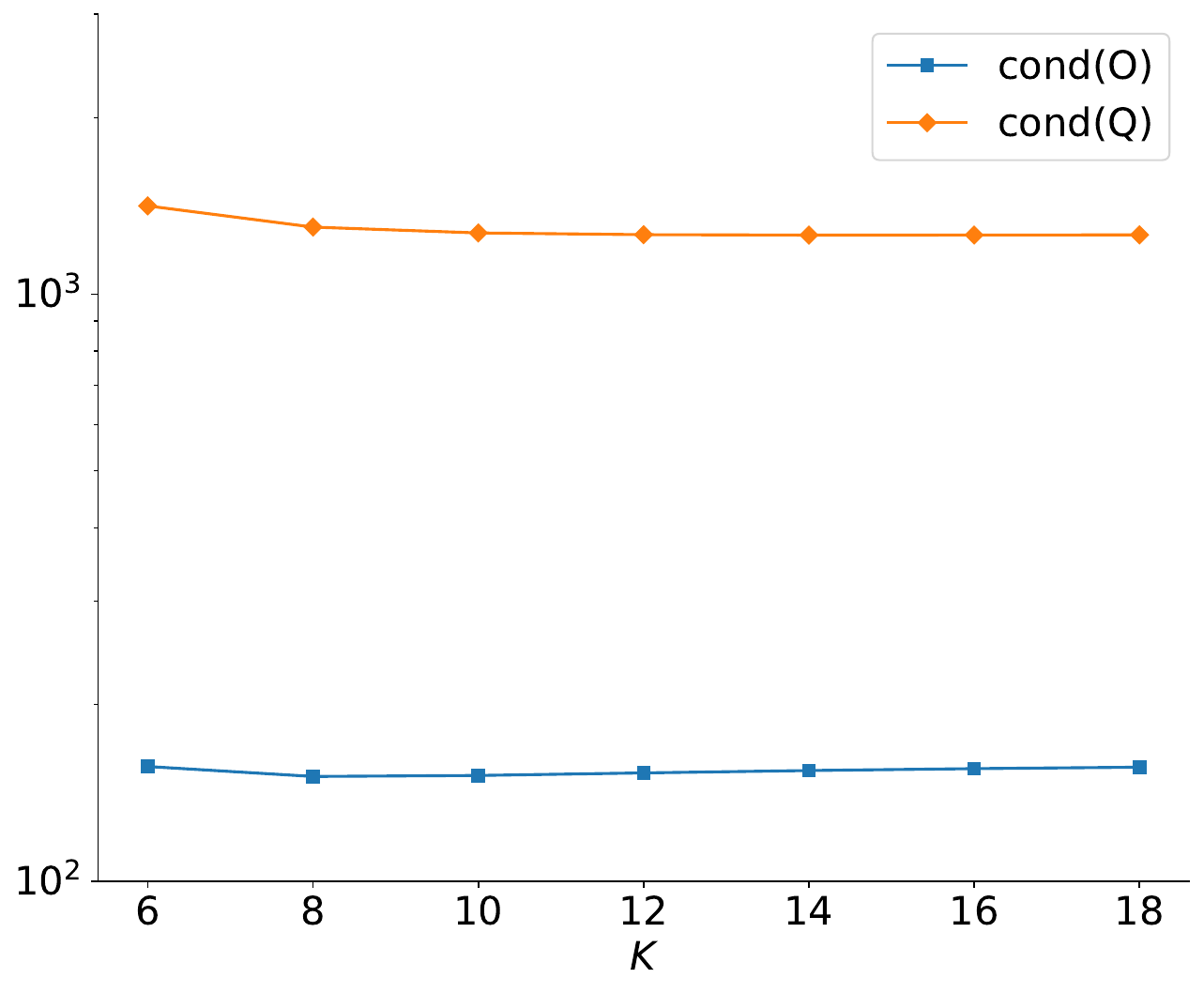}}
	\caption{\response{In sub-figure (a), 
			the red, orange and golden lines correspond to the identification accuracy of the Ho-Kalman algorithm when the number of trajectories $N$ is $10^2$, $10^3$, and $10^4$ respectively, and the blue, green and indigo lines correspond to the identification accuracy of the MOESP algorithm when the number of trajectories $N$ is $10^2$, $10^3$, and $10^4$ respectively. $K$ represents the length of trajectory used for identification.		Sub-figure (b) shows the distribution of the true poles (red `$\circ$') and the identified poles (blue `$+$' and green `$\times$' correspond to Ho-Kalman algorithm and MOESP algorithm respectively) in the complex plane when the number of trajectories $N$ is $10^4$ and the trajectory length $K$ is $18$, where the dashed line is the unit circle. The blue solid line in sub-figure (c) depicts the smallest singular value of the Hankel matrix  $\sigma_{\min}(H)$, and the blue and orange solid line in sub-figure (d) depict the condition numbers of (extended) observability and controllability matrices $O,Q$.
	}}
	\label{pde}
\end{figure*}

\response{
	We further assume that there are $4$ heat sources (sinks) serving as actuators at location $(l/4,l/4)$, $(l/4,3l/4)$, $(3l/4,l/4)$, and $(3l/4,3l/4)$. 
	If we combine the temperature of all grid points at time $k$ into a column vector $ T_k$, the evolution process of the discrete LTI system can be expressed as
	\begin{equation}\label{evolution process}
		T_{k+1} = AT_k + Bu_k.
	\end{equation}
	We assume there is a sensor deployed in the center $(l/2,l/2)$ of the grid, and it measures a linear combination of temperature at grid points around it. 
	Therefore,the measurement $y_k$ can be expressed as
	\begin{equation}\label{measurement process}
		y_k = CT_k + v_k,
	\end{equation}
	where $v_k \sim \mathcal{N}(0,\mathcal{R})$ denotes the i.i.d. measurement noise. For numerical experiments, we set the parameters to the following values:
	\begin{itemize}
		\item $\alpha = 0.2$; $l = 3$;
		\item $\mathcal{R} = 1$; $u_k \sim \mathcal{N}(0,1)$.
	\end{itemize}
	Since the system matrix $A$ contains double root and the system is neither controllable nor observable, a Kalman decomposition is performed to reduce the model into its minimal controllable and observable realization $(A_{r}, B_{r}, C_{r})$, which contains $3$ internal states. }

\response{
	Ho-Kalman algorithm and MOESP algorithm are used to identify the system matrix of the minimal realization from multiple independent trajectories collected. The result obtained is depicted in Figure~\ref{pde}. We use the Hausdorff distance to measure the distance between the true and the identified spectrum, which is defined as
	\begin{equation}
		\small d_{\rm Hausdorff}(A_{r},\widehat{A}_{r}) \triangleq \max \{
		\max_i \min_j |\lambda_i - \hat{\lambda}_j |, \max_j \min_i |\lambda_i - \hat{\lambda}_j| 
		\},
	\end{equation}
	where $\lambda_i$s and $\hat\lambda_i$s are the eigenvalues of $A_r$ and $\widehat A_r$ matrices respectively. It can be seen that the Hausdorff distances for both algorithms are all greater than $0.3$, and does not improve much even with more trajectories collected. Notice that for the minimal system, all the $3$ poles are real due to the symmetry of the $A$ matrix and the spectral radius is $0.85$. Hence, a trivial guess of $\widehat A_r= \mathbf{0}$ would provide a Hausdorff distance no more than $0.85$, which implies that even with $10^4$ trajectories collected, the Ho-Kalman algorithm and MOESP algorithm are not doing better than a trivial guess.
	On the other hand, it can be seen that $\sigma_{\min}(H)$ is minuscule and the condition numbers of (extended) observability and controllability matrices $O,Q$  are gargantuan, which shows that both identification algorithms are ill-conditioned. }

		\ssun{Next, we consider the identification of the classical two-mass spring-damper system \cite{liu2013fast}. The system consists of two point masses, $m_1 = 1$ kg and $m_2 = 1$ kg, interconnected in series via linear springs and viscous dampers. Each mass is also coupled to a fixed boundary via a spring-damper pair, as illustrated in Figure~\ref{fig:enter-label}.}
	
	\begin{figure}[htbp]
		\centering
		\begin{tikzpicture}[
			spring/.style={  
				thick,  
				decoration={  
					coil,  
					aspect=0.5,  
					segment length=1mm,  
					amplitude=1mm,  
					pre length=3mm,  
					post length=3mm  
				},
				decorate  
			},
			wall/.style={  
				thick,
				pattern=north east lines,  
				minimum height=1.5cm,  
				minimum width=1pt  
			},
			wheel/.style={  
				circle,
				fill=white,
				draw=black,
				thick,
				minimum size=4pt,  
				outer sep=0pt
			},
			ground/.style={  
				thick,
				pattern=north east lines,  
				minimum height=0.1cm,  
				anchor=north  
			},
			damper/.style={
				draw, 
				thick,
				line width=1pt,
				minimum width=0.3cm,
				minimum height=0.2cm,
				inner sep=0pt,
				outer sep=0pt
			}
			]
			
			\def\wallsep{7cm}     
			\def\masswidth{1.2cm} 
			\def\massheight{0.8cm} 
			\def\groundlevel{-1cm}  
			\def\wheeloffset{0.3cm} 
			\def\dampershift{0.3cm} 
			
			\node (ground) [ground, minimum width=\wallsep, anchor=north] at (\wallsep/2, \groundlevel-5) {};
			\draw[thick] (ground.north west) -- (ground.north east);
			
			\node (left wall) [wall, anchor=east] at (0, \groundlevel+16.5) {};  
			\draw[thick] (left wall.north east) -- (left wall.south east);  
			
			\node (right wall) [wall, anchor=west] at (\wallsep, \groundlevel+16.5) {};  
			\draw[thick] (right wall.north west) -- (right wall.south west);  
			
			\node (mass1) [draw, thick, minimum width=\masswidth, minimum height=\massheight, anchor=south] at (2, \groundlevel) {$m_1$};
			\fill (1.4,\massheight/2+\groundlevel+6) circle (1pt);  
			\fill (2.6,\massheight/2+\groundlevel+6) circle (1pt);  
			\fill (1.4,\massheight/2+\groundlevel-6.5) circle (1pt); 
			\fill (2.6,\massheight/2+\groundlevel-6.5) circle (1pt); 
			
			\fill (0,\massheight/2+\groundlevel+6) circle (1pt); 
			\fill (0,\massheight/2+\groundlevel-6.5) circle (1pt); 
			\fill (7,\massheight/2+\groundlevel+6) circle (1pt); 
			
			\node [wheel] at ([xshift=-\wheeloffset]mass1.south) {};  
			\node [wheel] at ([xshift=\wheeloffset]mass1.south) {};   
			
			\node (mass2) [draw, thick, minimum width=\masswidth, minimum height=\massheight, anchor=south] at (5, \groundlevel) {$m_2$};
			\fill (4.4,\massheight/2+\groundlevel+6) circle (1pt);  
			\fill (5.6,\massheight/2+\groundlevel+6) circle (1pt);  
			\fill (4.4,\massheight/2+\groundlevel-6.5) circle (1pt); 
			
			\node [wheel] at ([xshift=-\wheeloffset]mass2.south) {};  
			\node [wheel] at ([xshift=\wheeloffset]mass2.south) {};   
			
			\draw[spring] (0,\massheight/2+\groundlevel+6) -- (1.4,\massheight/2+\groundlevel+6)
			node [midway, above=0.5mm] {\small $k_1$};
			
			\draw[spring] (2.6,\massheight/2+\groundlevel+6) -- (4.4,\massheight/2+\groundlevel+6)
			node [midway, above=0.5mm] {\small $k_2$};
			
			\draw[spring] (5.6,\massheight/2+\groundlevel+6) -- (7,\massheight/2+\groundlevel+6)
			node [midway, above=0.5mm] {\small $k_3$};
			
			\coordinate (dampStart1) at ([yshift=-\dampershift]0,\massheight/2+\groundlevel+2);
			\coordinate (dampEnd1) at ([yshift=-\dampershift]1.4,\massheight/2+\groundlevel+2);
			
			\node [damper, anchor=center] (d1) at ($(dampStart1)!0.5!(dampEnd1)$) {};
			
			\draw[thick] (dampStart1) -- (d1.west);
			\draw[thick] (d1.east) -- (dampEnd1);
			
			\draw[thick] (d1.north east) -- ++(0.1cm,0);
			\draw[thick] (d1.south east) -- ++(0.1cm,0);

			\node[below] at (1.1,-0.8) {\small $c_1$};
			
			\coordinate (dampStart2) at ([yshift=-\dampershift]2.6,\massheight/2+\groundlevel+2);
			\coordinate (dampEnd2) at ([yshift=-\dampershift]4.4,\massheight/2+\groundlevel+2);
			
			\node [damper, anchor=center] (d2) at ($(dampStart2)!0.5!(dampEnd2)$) {};
			
			\draw[thick] (dampStart2) -- (d2.west);
			\draw[thick] (d2.east) -- (dampEnd2);
			
			\draw[thick] (d2.north east) -- ++(0.1cm,0);
			
			\draw[thick] (d2.south east) -- ++(0.1cm,0);

			\node[below] at (1.1+2.8,-0.8) {\small $c_2$};

			\draw[thick, ->,black] (2,-0.2+0.2) -- ++(0.8,0) 
			node[above] {$q_1$};
			\draw[thick, ->, black] (2-0.8,-0.2+0.4) -- ++(0.8,0) 
			node[pos=0.2, above] {$f_1$};
			\draw[thick, ->, black] (5,-0.2+0.2) -- ++(0.8,0) 
			node[above] {$q_2$};
			\draw[thick, ->, black] (5-0.8,-0.2+0.4) -- ++(0.8,0) 
			node[pos=0.2, above] {$f_2$};
			
			\draw[thick] (2,-0.2) -- (2,-0.2+0.6);
			
			\draw[thick] (5,-0.2) -- (5,-0.2+0.6
			);

		\end{tikzpicture}
		\caption{\ssun{A two-mass spring-damper system.}}
		\label{fig:enter-label}
	\end{figure}
	
	\ssun{Let $q_1(t)$ and $q_2(t)$ denote the horizontal displacements of the masses $m_1$ and $m_2$, respectively. The control forces $f_1(t)$ and $f_2(t)$ are applied directly to the respective masses, while the system outputs are defined as their displacements. Applying Newton's second law yields the following equations of motion:
		\begin{align}\notag
			m_1 \ddot{q}_1 &= -k_1 q_1 + k_2(q_2 - q_1) - c_1 \dot{q}_1 + c_2(\dot{q}_2 - \dot{q}_1) + f_1(t), \\\notag
			m_2 \ddot{q}_2 &= -k_2(q_2 - q_1) - k_3 q_2 - c_2(\dot{q}_2 - \dot{q}_1) + f_2(t).
		\end{align}
		Defining the state vector, control input vector and output vector as $x(t) = \begin{bmatrix}
			q_1(t) & \dot{q}_1(t) & q_2(t) & \dot{q}_2(t)
		\end{bmatrix}^\top$, $u(t) = \begin{bmatrix}
			f_1(t) & f_2(t)
		\end{bmatrix}^\top$, $y(t) = \begin{bmatrix}
			q_1(t)  & q_2(t) 
		\end{bmatrix}^\top$,
		the system dynamics can be expressed in the continuous-time state-space form:
		\begin{align}\notag
			\dot{x}(t) &= A_cx(t) + B_cu(t), \\\notag
			y(t) &= C_cx(t),
		\end{align}
		where 
		$
		A_c = \begin{bmatrix}
			0 & 1 & 0 & 0 \\
			-\frac{k_1 + k_2}{m_1} & -\frac{c_1 + c_2}{m_1} & \frac{k_2}{m_1} & \frac{c_2}{m_1} \\
			0 & 0 & 0 & 1 \\
			\frac{k_2}{m_2} & \frac{c_2}{m_2} & -\frac{k_2 + k_3}{m_2} & -\frac{c_2}{m_2}
		\end{bmatrix}, \quad
		B_c = \begin{bmatrix}
			0 & \frac{1}{m_1} & 0 & 0\\
			0 & 0 & 0 & \frac{1}{m_2}
		\end{bmatrix}^\top, \quad
		C_c = \begin{bmatrix}
			1 & 0 & 0 & 0 \\
			0 & 0 & 1 & 0
		\end{bmatrix}
		$.}
	
	\ssun{To facilitate identification and digital implementation, the continuous-time system is converted to a discrete-time system using zero-order hold (ZOH) with a sampling period of $T_s = 0.1$ seconds. The resulting discrete-time state-space model is:
		\begin{equation}\label{discrete-time system}
			\begin{aligned}
				x_{k+1} &= A_d x_k + B_d u_k, \\
				y_k &= C_d x_k + v_k,
			\end{aligned}
		\end{equation}
		where $A_d = e^{A_c T_s}$, $B_d = \left( \int_0^{T_s} e^{A_c \tau} d\tau \right) B_c$, $C_d = C_c$, and  $v_k \sim \mathcal{N}(\mathbf{0},10^{-2}I_2)$ is the i.i.d. measurement noise. For numerical experiments, we set the parameters to the following values: 
		\begin{itemize}
			\item $k_1 = 0.8$, $k_2 = 1.2$, $k_3 = 1.5$;
			\item  $c_1 = c_2 = 7$; $u_k \sim \mathcal{N}(\mathbf{0},I_2)$.
	\end{itemize}}
	\ssun{Ho-Kalman algorithm and MOESP algorithm are used to identify the system matrix from multiple independent trajectories collected. The result obtained is depicted in Figure~\ref{two_mass}. It can be seen that the Hausdorff distances for both algorithms are all greater than $0.3$, and does not improve much even with more trajectories collected.  Notice that the spectrum of matrix $A_d$ is $\{0.16,0.83,0.92,0.99\}$. Hence, a trivial guess of $\widehat A_d= \mathbf{0}$ would provide a Hausdorff distance no more than $0.99$, which implies that even with $10^4$ trajectories collected, the Ho-Kalman algorithm and MOESP algorithm are not doing better than a trivial guess. On the other hand, it can be seen that $\sigma_{\min}(H)$ is minuscule and the condition numbers of (extended) observability and controllability matrices $O,Q$  are gargantuan, which shows that both identification algorithms are ill-conditioned.}
	
	\begin{figure*}[htbp]
		\centering
		\subcaptionbox{\label{two_mass_id}}[.26\linewidth]
		{\includegraphics[width = 1\linewidth]{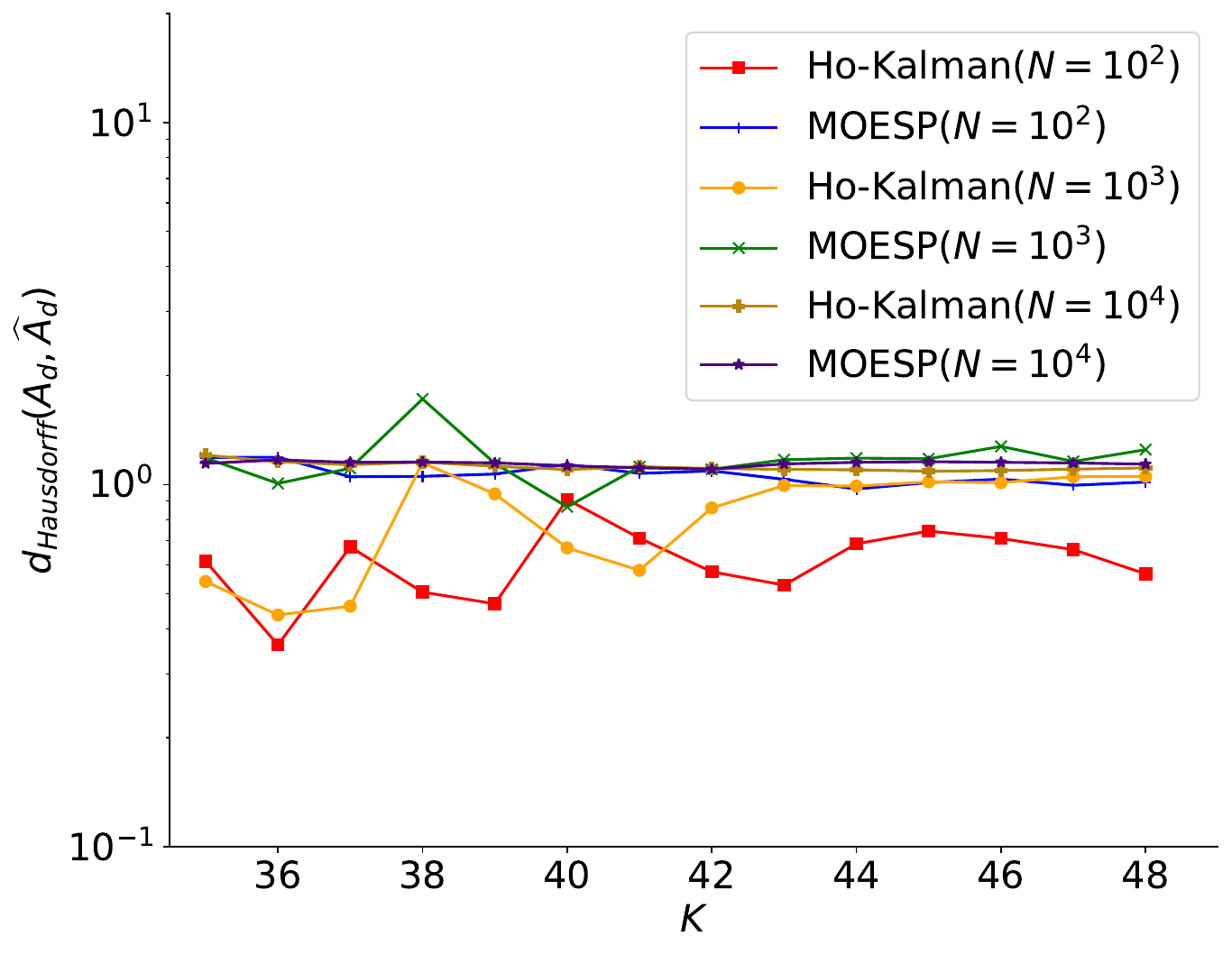} }
		\subcaptionbox{\label{two_mass_pole_distribution}}[.22\linewidth]
		{\includegraphics[width = 1\linewidth]{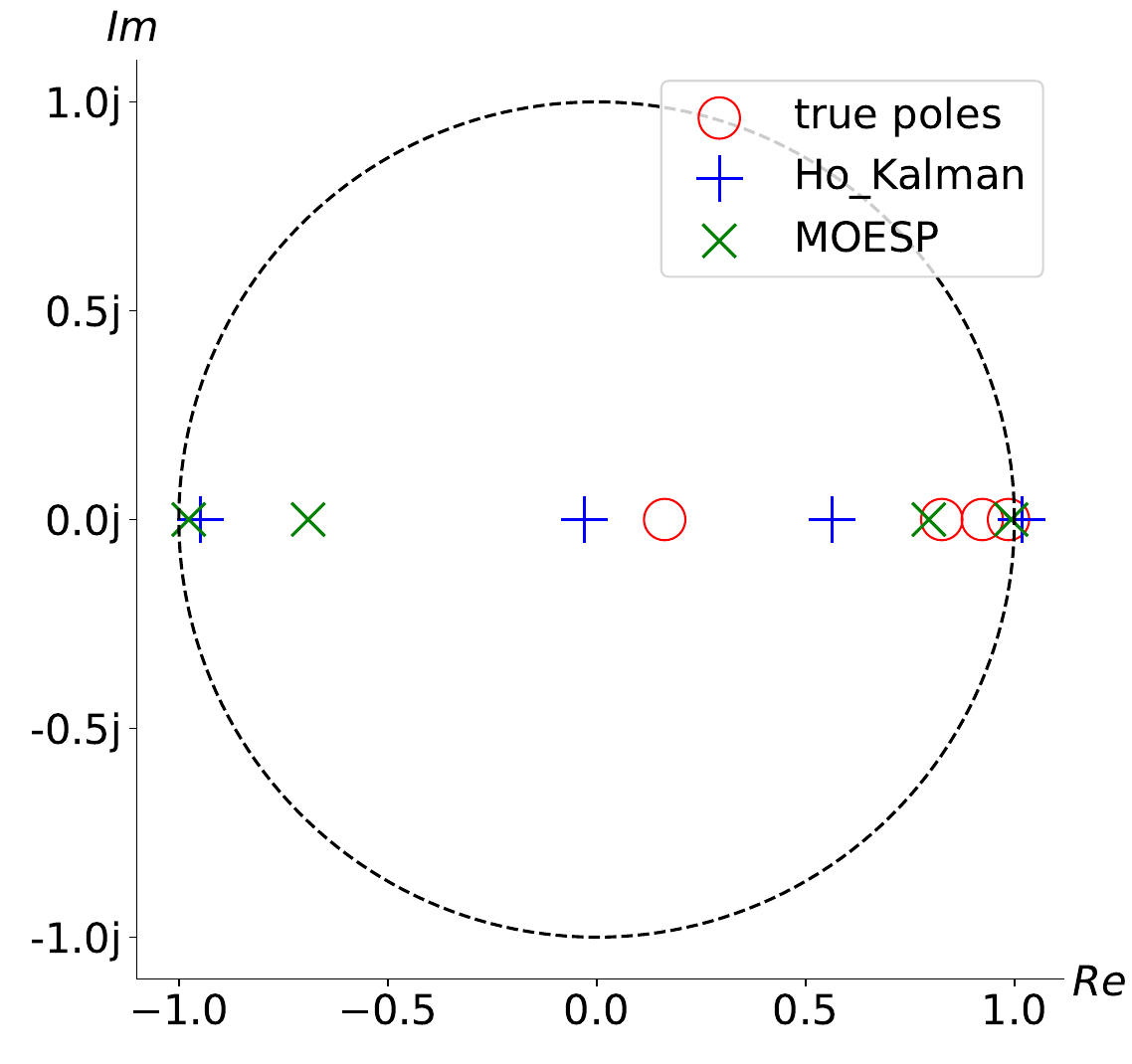} }
		\subcaptionbox{\label{two_mass_id_min_singular_value_H}}[.24\linewidth] {
			\includegraphics[width = 1\linewidth]{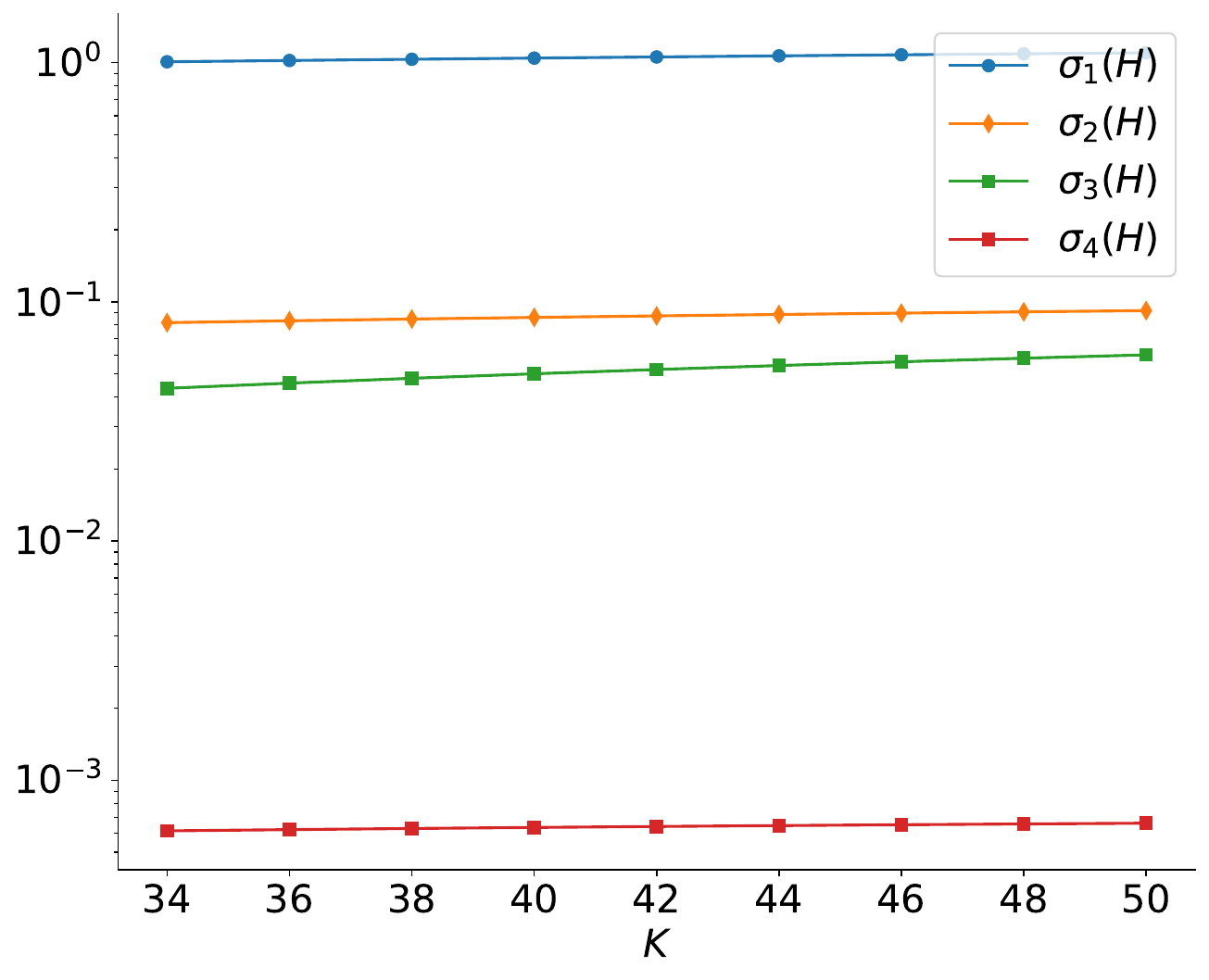}}
		\subcaptionbox{\label{two_mass_id_condition_number_O_Q}}[.24\linewidth] {
			\includegraphics[width = 1\linewidth]{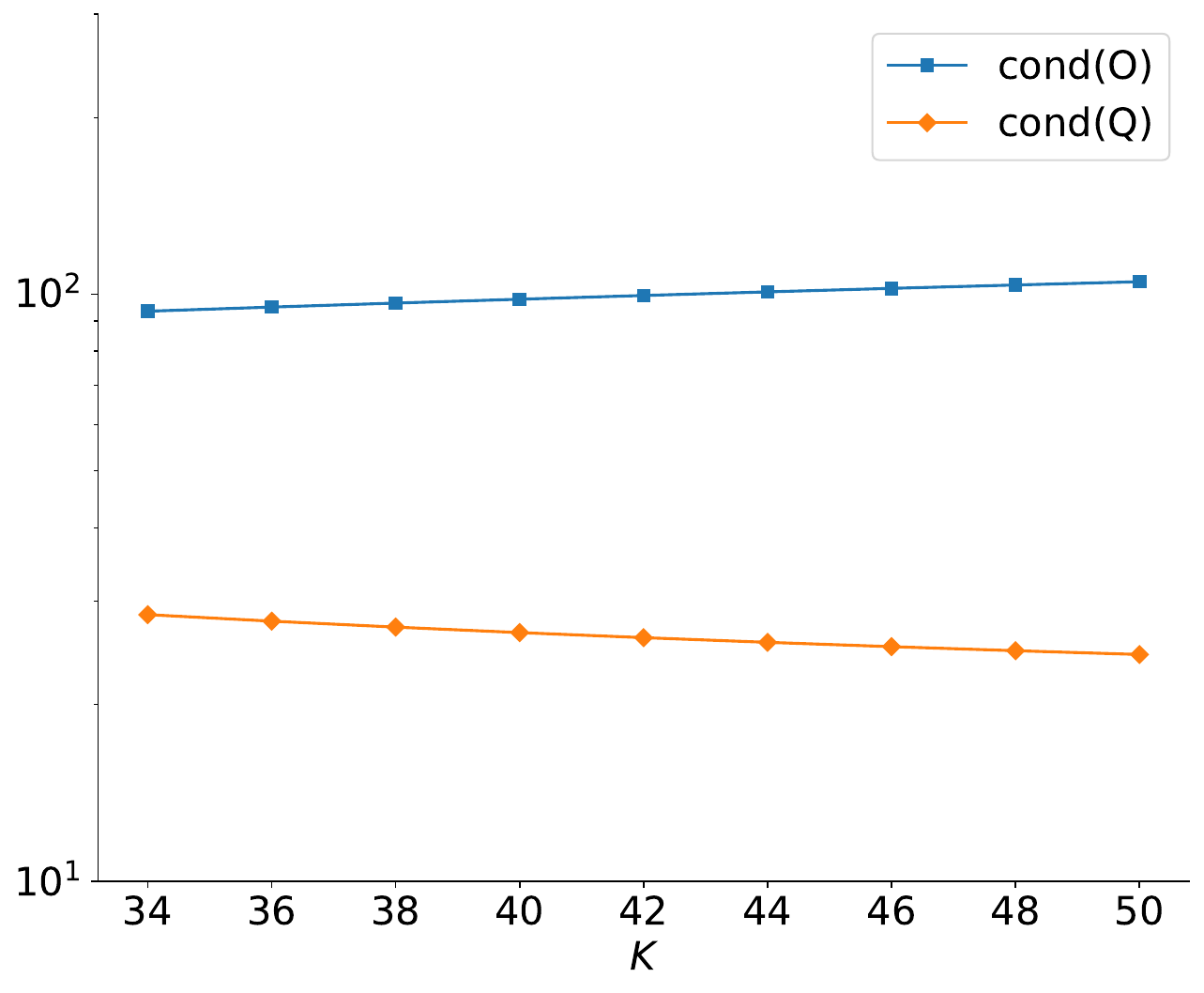}}
		\caption{\response{\ssun{In sub-figure (a), 
					the red, orange and golden lines correspond to the identification accuracy of the Ho-Kalman algorithm when the number of trajectories $N$ is $10^2$, $10^3$, and $10^4$ respectively, and the blue, green and indigo lines correspond to the identification accuracy of the MOESP algorithm when the number of trajectories $N$ is $10^2$, $10^3$, and $10^4$ respectively. $K$ represents the length of trajectory used for identification.		Sub-figure (b) shows the distribution of the true poles (red `$\circ$') and the identified poles (blue `$+$' and green `$\times$' correspond to Ho-Kalman algorithm and MOESP algorithm respectively) in the complex plane when the number of trajectories $N$ is $10^4$ and the trajectory length $K$ is $48$, where the dashed line is the unit circle. The blue solid line in sub-figure (c) depicts the smallest singular value of the Hankel matrix  $\sigma_{\min}(H)$, and the blue and orange solid line in sub-figure (d) depict the condition numbers of (extended) observability and controllability matrices $O,Q$.}
		}}
		\label{two_mass}
	\end{figure*}

	In the next numerical example, we consider the estimation of $c_i, \lambda_i$ from the noisy signal $y_k = \sum_{i=1}^n c_i \lambda_i^k+ v_k$, where the noise $v_k \sim \mathcal N(0, 1)$. This problem can be solved by cast it into the identification of LTI system \eqref{eq:noisysignalest}. As a result, we first consider solving the estimation problem via Ho-Kalman algorithm or MOESP algorithm, and show the $n$-th largest singular value of the Hankel matrix $ H$, the condition number of the observability matrix $ O $ used in both algorithms, and the corresponding bounds derived in Theorem~\ref{Ho-Kalman is ill-conditioned} in Figure~\ref{fig-L} and Figure~\ref{fig-O}. $ K_1 $, $ K_2 $ are set to $ n $. For each $n$, we repeat $ 10^4 $ trials, with $\lambda_i, c_i$ both uniformly sampled from the line segment $ [-1,1] $.

\begin{figure*}[htbp]
	\centering
	\begin{minipage}[b]{0.48\textwidth}
		\includegraphics[width=\textwidth]{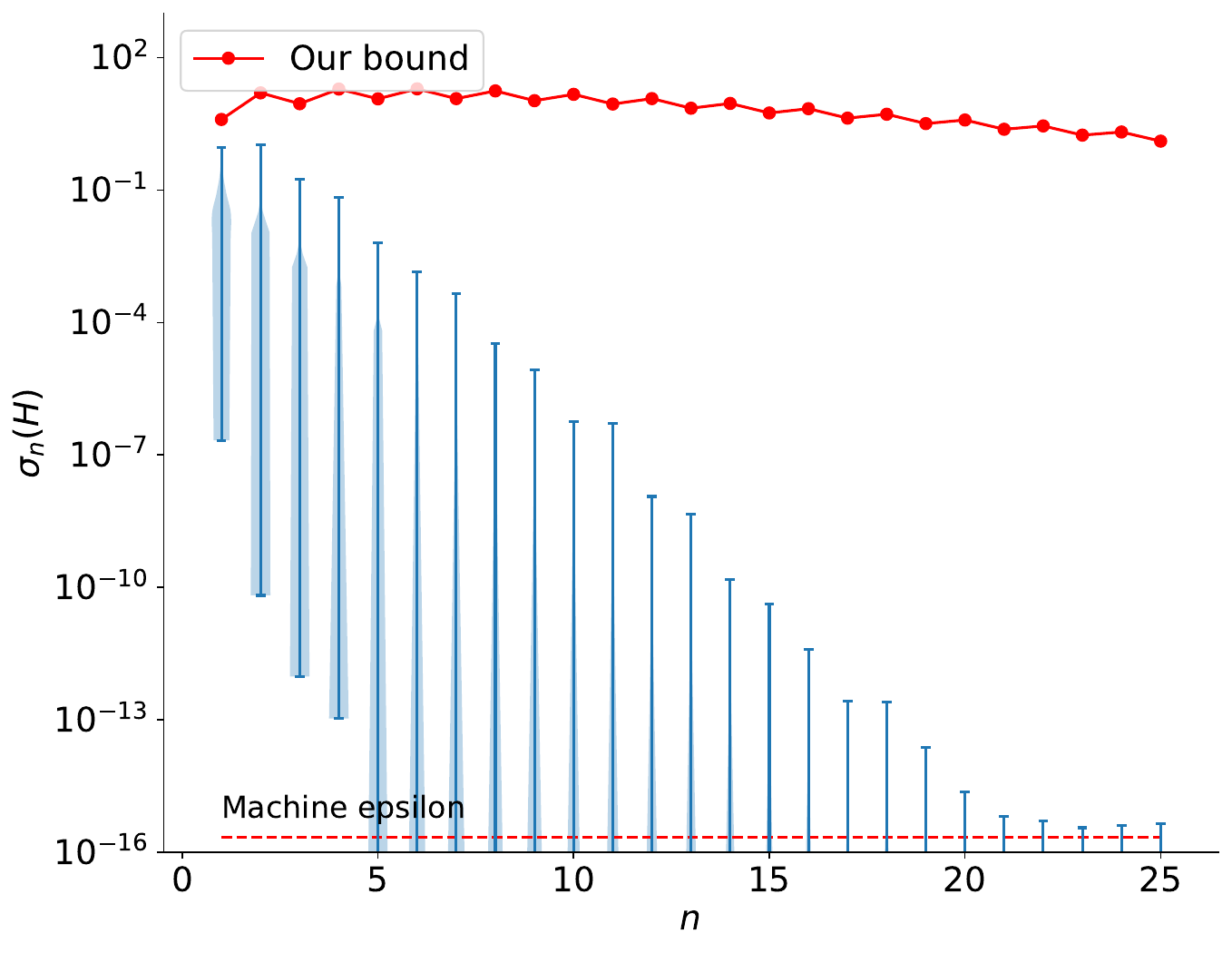}
		\caption{Numerical results of the key matrix $ H $ in Ho-Kalman algorithm.
			The blue violinplot represents the sampling distribution of the $n$-th largest singular value of the Hankel matrix $ H$ in different dimension $ n $.
			The red dashed and solid lines represent machine epsilon and the bound we derive in \eqref{L}, respectively.}
		\label{fig-L}
	\end{minipage}
	\hfill
	\begin{minipage}[b]{0.48\textwidth}
		\includegraphics[width=\textwidth]{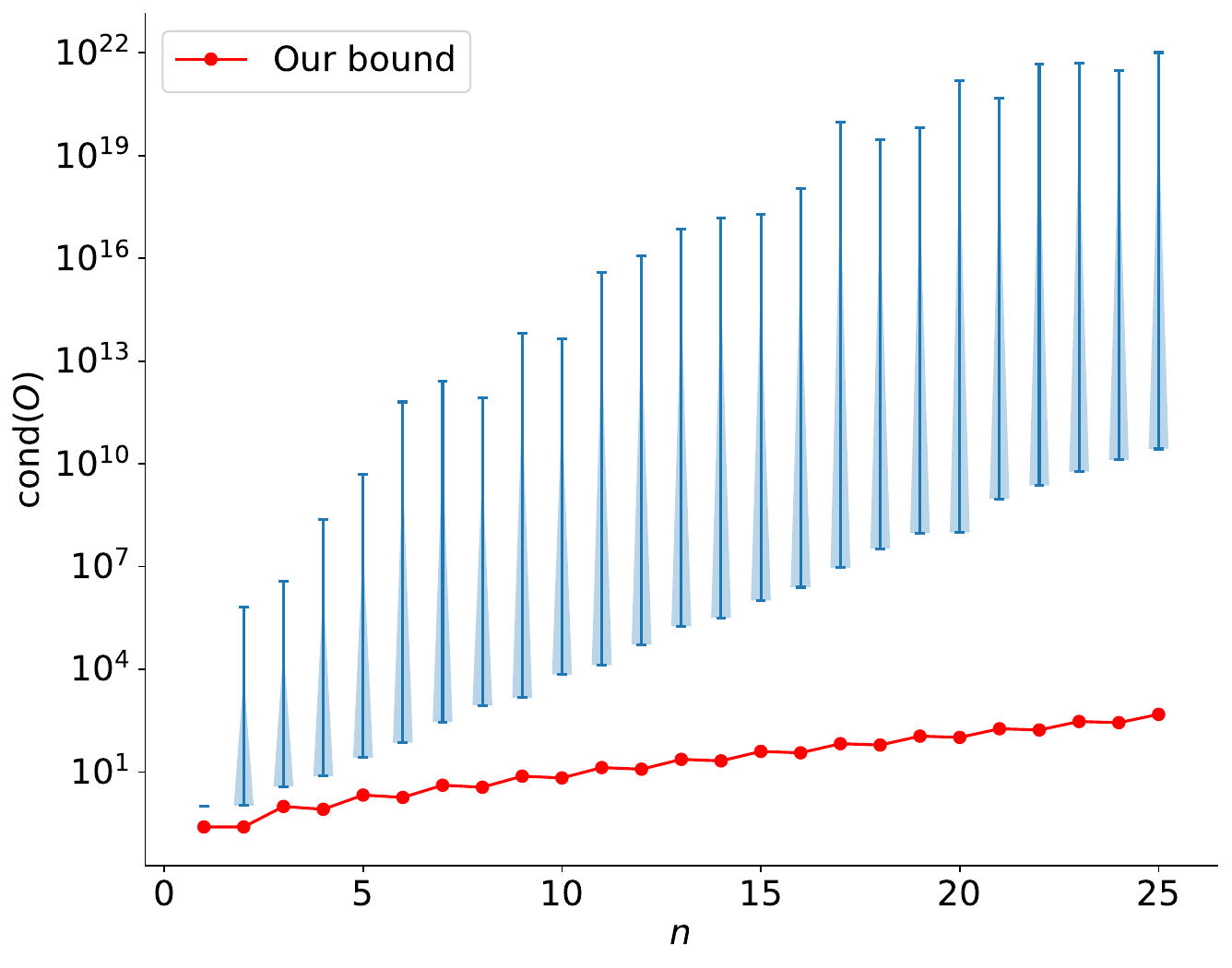}
		\caption{Numerical results of the key matrix $ O $ in Ho-Kalman/MOESP algorithm.
			The blue violinplot represents the sampling distribution of the condition number of the observability matrix $O $  in different dimension $ n $.
			The red solid line represents  the bound we derive in \eqref{O and Q}.}
		\label{fig-O}
	\end{minipage}
\end{figure*}

\begin{figure*}[bthp]
	\centering
	\begin{subfigure}[t]{0.33\linewidth}
		\centering
		\includegraphics[width=\linewidth]{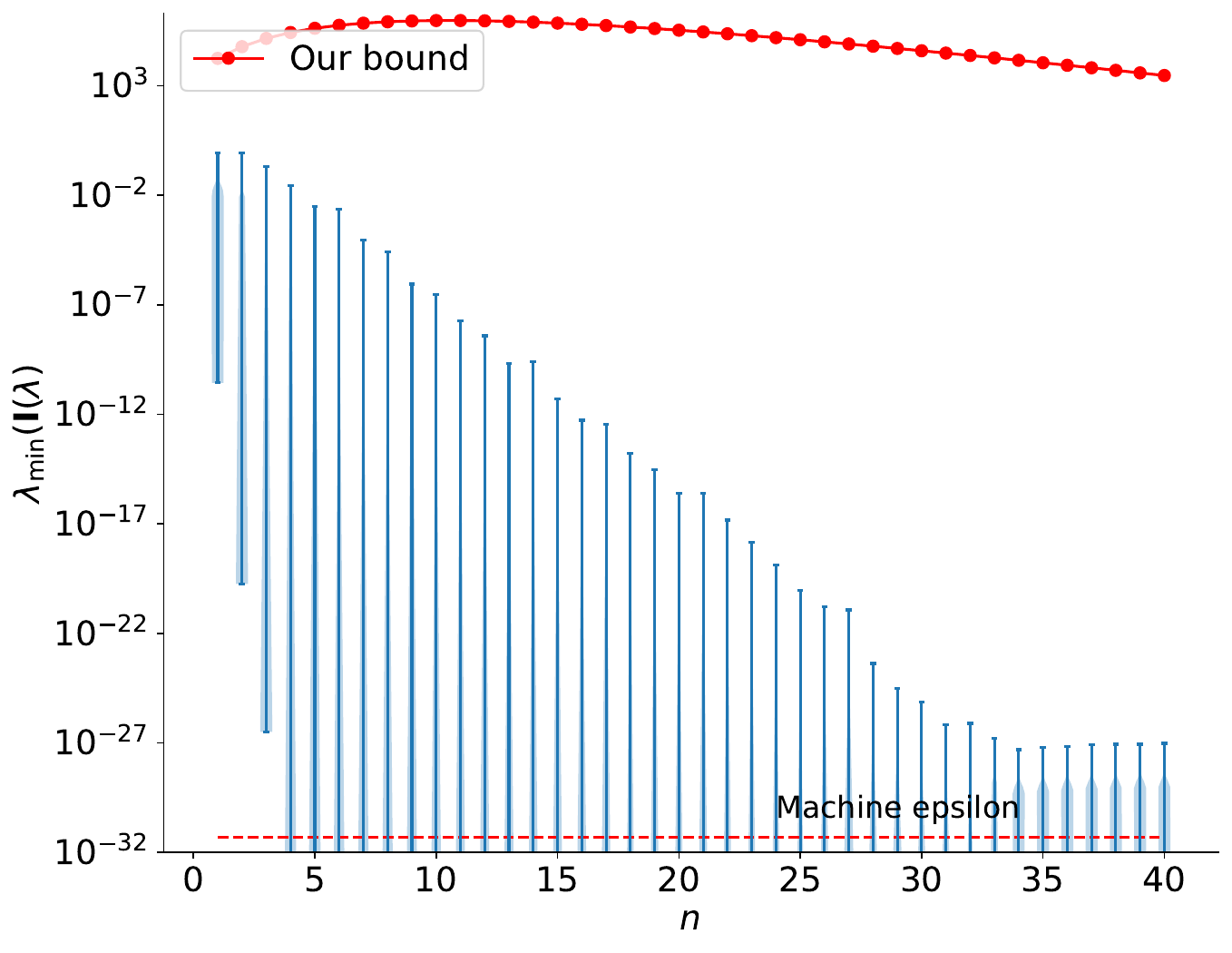}
		\caption{\ssun{The trajectory length $K = n+1$}}
		\label{fig/eig_min_fisher}
	\end{subfigure}%
	\hfill
	\begin{subfigure}[t]{0.33\linewidth}
		\centering
		\includegraphics[width=\linewidth]{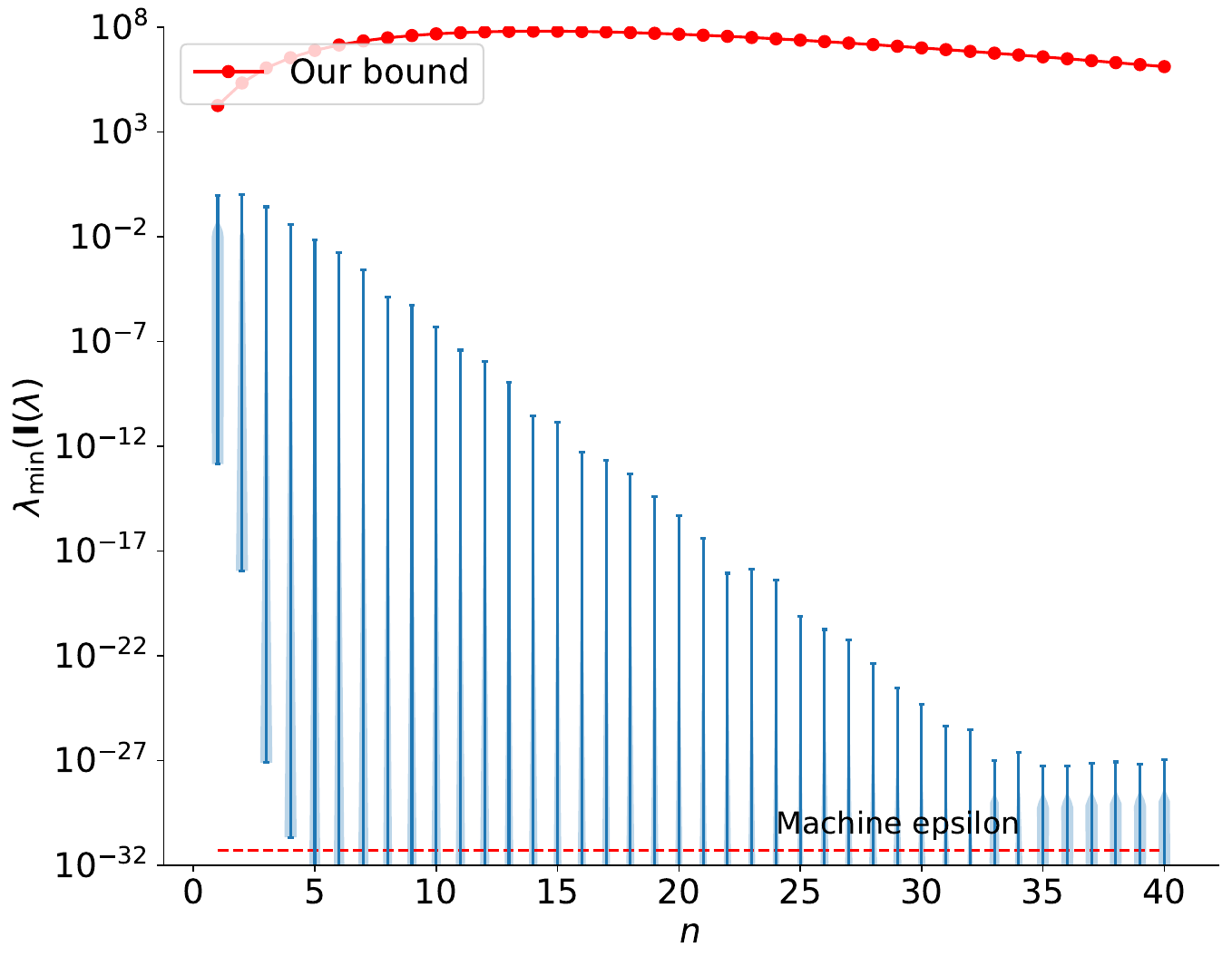}
		\caption{\ssun{The trajectory length $K = 2n$}}
		\label{fig/eig_min_fisher_2n}
	\end{subfigure}%
	\hfill
	\begin{subfigure}[t]{0.33\linewidth}
		\centering
		\includegraphics[width=\linewidth]{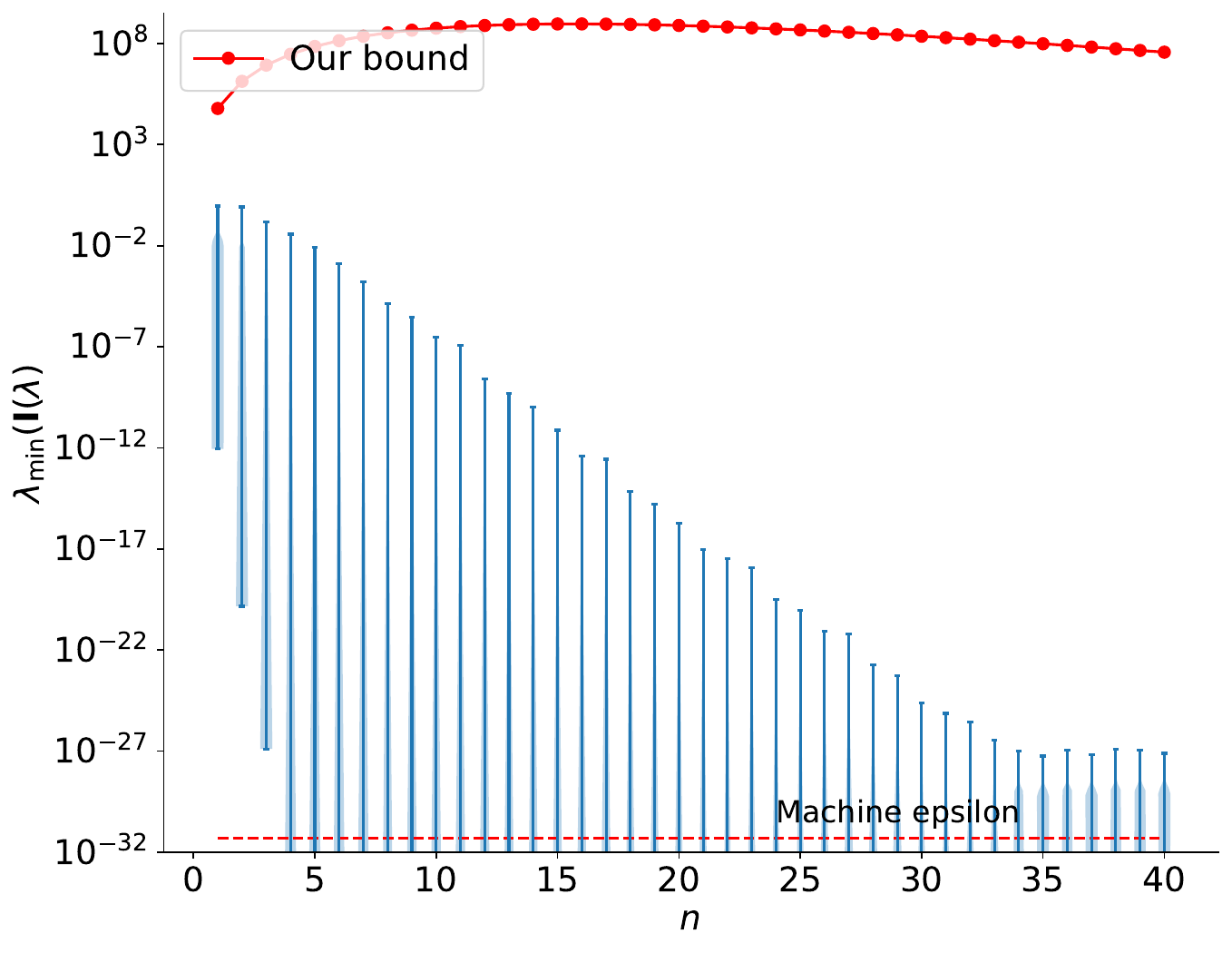}
		\caption{\ssun{The trajectory length $K = 3n$}}
		\label{fig/eig_min_fisher_3n}
	\end{subfigure}
	\caption{\ssun{Numerical results of Fisher Information matrix $ \mathbf{I}(\bm{\lambda}) $ under different trajectory lengths: (a) $K = n+1$; (b) $K = 2n$; (c) $K = 3n$.
		The blue violinplot represents the sampling distribution of the smallest eigenvalue of Fisher Information matrix $ \mathbf{I}(\bm{\lambda}) $ in different dimension $ n $.
			The red dashed and solid lines represent machine epsilon and the bound we derive in \eqref{worst case upper bound}, respectively.}}
	\label{fig_fisher}
\end{figure*}

Finally, to show that the identification problem is ill-conditioned, i.e. all unbiased estimators of $\lambda_i$s are ill-conditioned, we depict\footnote{The machine epsilon in Figure \ref{fig_fisher} is different from that in Figure \ref{fig-L}, because we do not directly calculate $ \lambda_{n}(\mathbf{I}(\bm{\lambda})) $, but according to \eqref{I = SV}, first calculate $ \sigma_{n}(SV) $, and then square it.} the smallest eigenvalue of the Fisher Information matrix $ \mathbf{I}(\bm{\lambda}) $ and our upper bound derived in Theorem~\ref{worst case}  in Figure~\ref{fig_fisher}. Without loss of generality, we set the number $ N $ of trajectories\footnote{The Fisher information matrix is linear with respect to the number of independently collected trajectories.} to 1.
\ssun{The trajectory length $K$ is chosen as $n+1$, $2n$, or $3n$, corresponding to the cases illustrated in Figs. ~\ref{fig/eig_min_fisher}, ~\ref{fig/eig_min_fisher_2n}, and~\ref{fig/eig_min_fisher_3n}, respectively.} 
For each $n$, we repeat $ 10^4 $ trials.

It can be seen that our bounds are valid, However, the sampled system is much more ill-conditioned than our results suggested. Since our ``uniform'' bound works for any system, there may exist some parameters $c_i, \lambda_i$ which make the identification problem ``less'' ill-conditioned, but is not sampled due to the limited sample size. On the other hand, we think that there may be ways to strengthen our bounds, which we plan to investigate in the future.

\section{Conclusion}\label{sec:conclusion}

This paper considers the finite \syl{sample} identification performance of an $ n $ dimensional discrete-time MIMO LTI system, with $p$ inputs and $m$ outputs. We first prove that the widely-used Ho-Kalman algorithm and \revise{MOESP} algorithm are ill-conditioned for MIMO systems when $ n/m$ or $n/p $ is large. Moreover, a fundamental limit on MIMO systems identification is derived by analyzing the Fisher Information Matrix used in Cram\'er–Rao bound. Based on this analysis, we reveal that the sample complexity on unknown poles of stable (or marginally stable) MIMO systems using \textit{any} unbiased estimation algorithm to a certain level of accuracy explodes superpolynomially with respect to $ n/(pm)$. Future works include tightening our bound while designing system identification algorithms that can approach the fundamental limit.


\bibliographystyle{unsrt}
\bibliography{ref}{}

\begin{thebibliography}{10}

\bibitem{ljung1986system}
Lennart Ljung.
\newblock System identification: theory for the user prentice-hall, inc.
\newblock {\em Upper Saddle River, NJ, USA}, 1986.

\bibitem{goodwin1977dynamic}
Graham~C. Goodwin and Robert~L. Payne.
\newblock {\em DYNAMIC SYSTEM IDENTIFICATION}.
\newblock Elsevier, 1977.

\bibitem{ljung1976consistency}
Lennart Ljung.
\newblock Consistency of the least-squares identification method.
\newblock {\em IEEE Transactions on Automatic Control}, 21(5):779--781, 1976.

\bibitem{deistler1995consistency}
Manfred Deistler, K~Peternell, and Wolfgang Scherrer.
\newblock Consistency and relative efficiency of subspace methods.
\newblock {\em Automatica}, 31(12):1865--1875, 1995.

\bibitem{zheng2020non}
Yang Zheng and Na~Li.
\newblock Non-asymptotic identification of linear dynamical systems using
  multiple trajectories.
\newblock {\em IEEE Control Systems Letters}, 5(5):1693--1698, 2020.

\bibitem{campi2002finite}
Marco~C Campi and Erik Weyer.
\newblock Finite sample properties of system identification methods.
\newblock {\em IEEE Transactions on Automatic Control}, 47(8):1329--1334, 2002.

\bibitem{vidyasagar2008learning}
M~Vidyasagar and Rajeeva~L Karandikar.
\newblock A learning theory approach to system identification and stochastic
  adaptive control.
\newblock {\em Journal of Process Control}, 18:421--430, 2008.

\bibitem{hardt2018gradient}
Moritz Hardt, Tengyu Ma, and Benjamin Recht.
\newblock Gradient descent learns linear dynamical systems.
\newblock {\em Journal of Machine Learning Research}, 19:1--44, 2018.

\bibitem{care2017finite}
Algo Care, Bal{\'a}zs~Cs Cs{\'a}ji, Marco~C Campi, and Erik Weyer.
\newblock Finite-sample system identification: An overview and a new
  correlation method.
\newblock {\em IEEE Control Systems Letters}, 2(1):61--66, 2017.

\bibitem{simchowitz2019learning}
Max Simchowitz, Ross Boczar, and Benjamin Recht.
\newblock Learning linear dynamical systems with semi-parametric least squares.
\newblock In {\em Conference on Learning Theory}, pages 2714--2802. PMLR, 2019.

\bibitem{fattahi2019learning}
Salar Fattahi, Nikolai Matni, and Somayeh Sojoudi.
\newblock Learning sparse dynamical systems from a single sample trajectory.
\newblock In {\em 2019 IEEE 58th Conference on Decision and Control (CDC)},
  pages 2682--2689. IEEE, 2019.

\bibitem{sarkar2019near}
Tuhin Sarkar and Alexander Rakhlin.
\newblock Near optimal finite time identification of arbitrary linear dynamical
  systems.
\newblock In {\em International Conference on Machine Learning}, pages
  5610--5618. PMLR, 2019.

\bibitem{tsiamis2019finite}
Anastasios Tsiamis and George~J Pappas.
\newblock Finite sample analysis of stochastic system identification.
\newblock In {\em 2019 IEEE 58th Conference on Decision and Control (CDC)},
  pages 3648--3654. IEEE, 2019.

\bibitem{tsiamis2022online}
Anastasios Tsiamis and George~J Pappas.
\newblock Online learning of the kalman filter with logarithmic regret.
\newblock {\em IEEE Transactions on Automatic Control}, 68(5):2774--2789, 2022.

\bibitem{wang2022large}
Han Wang and James Anderson.
\newblock Large-scale system identification using a randomized svd.
\newblock In {\em 2022 American Control Conference (ACC)}, pages 2178--2185.
  IEEE, 2022.

\bibitem{dean2020sample}
Sarah Dean, Horia Mania, Nikolai Matni, Benjamin Recht, and Stephen Tu.
\newblock On the sample complexity of the linear quadratic regulator.
\newblock {\em Foundations of Computational Mathematics}, 20(4):633--679, 2020.

\bibitem{oymak2021revisiting}
Samet Oymak and Necmiye Ozay.
\newblock Revisiting ho--kalman-based system identification: Robustness and
  finite-sample analysis.
\newblock {\em IEEE Transactions on Automatic Control}, 67(4):1914--1928, 2021.

\bibitem{jedra2022finite}
Yassir Jedra and Alexandre Proutiere.
\newblock Finite-time identification of linear systems: Fundamental limits and
  optimal algorithms.
\newblock {\em IEEE Transactions on Automatic Control}, 68(5):2805--2820, 2022.

\bibitem{simchowitz2018learning}
Max Simchowitz, Horia Mania, Stephen Tu, Michael~I Jordan, and Benjamin Recht.
\newblock Learning without mixing: Towards a sharp analysis of linear system
  identification.
\newblock In {\em Conference On Learning Theory}, pages 439--473. PMLR, 2018.

\bibitem{faradonbeh2018finite}
Mohamad Kazem~Shirani Faradonbeh, Ambuj Tewari, and George Michailidis.
\newblock Finite time identification in unstable linear systems.
\newblock {\em Automatica}, 96:342--353, 2018.

\bibitem{wagenmaker2020active}
Andrew Wagenmaker and Kevin Jamieson.
\newblock Active learning for identification of linear dynamical systems.
\newblock In {\em Conference on Learning Theory}, pages 3487--3582. PMLR, 2020.

\bibitem{matni2019tutorial}
Nikolai Matni and Stephen Tu.
\newblock A tutorial on concentration bounds for system identification.
\newblock In {\em 2019 IEEE 58th Conference on Decision and Control (CDC)},
  pages 3741--3749. IEEE, 2019.

\bibitem{jedra2019sample}
Yassir Jedra and Alexandre Proutiere.
\newblock Sample complexity lower bounds for linear system identification.
\newblock In {\em 2019 IEEE 58th Conference on Decision and Control (CDC)},
  pages 2676--2681. IEEE, 2019.

\bibitem{tsiamis2021linear}
Anastasios Tsiamis and George~J Pappas.
\newblock Linear systems can be hard to learn.
\newblock In {\em 2021 60th IEEE Conference on Decision and Control (CDC)},
  pages 2903--2910. IEEE, 2021.

\bibitem{zheng2021sample}
Yang Zheng, Luca Furieri, Maryam Kamgarpour, and Na~Li.
\newblock Sample complexity of linear quadratic gaussian (lqg) control for
  output feedback systems.
\newblock In {\em Learning for Dynamics and Control}, pages 559--570. PMLR,
  2021.

\bibitem{chiuso2004ill}
Alessandro Chiuso and Giorgio Picci.
\newblock On the ill-conditioning of subspace identification with inputs.
\newblock {\em Automatica}, 40(4):575--589, 2004.

\bibitem{hachicha2014n4sid}
Slim Hachicha, Maher Kharrat, and Abdessattar Chaari.
\newblock N4sid and moesp algorithms to highlight the ill-conditioning into
  subspace identification.
\newblock {\em International Journal of Automation and Computing},
  11(1):30--38, 2014.

\bibitem{ikeda2015estimation}
Kenji Ikeda and Hiroshi Oku.
\newblock Estimation error analysis of system matrices in some subspace
  identification methods.
\newblock In {\em 2015 10th Asian Control Conference (ASCC)}, pages 1--6. IEEE,
  2015.

\bibitem{lale2020logarithmic}
Sahin Lale, Kamyar Azizzadenesheli, Babak Hassibi, and Anima Anandkumar.
\newblock Logarithmic regret bound in partially observable linear dynamical
  systems.
\newblock {\em Advances in Neural Information Processing Systems},
  33:20876--20888, 2020.

\bibitem{kay1981spectrum}
Steven~M Kay and Stanley~Lawrence Marple.
\newblock Spectrum analysis—a modern perspective.
\newblock {\em Proceedings of the IEEE}, 69(11):1380--1419, 1981.

\bibitem{percival1993spectral}
Donald~B Percival and Andrew~T Walden.
\newblock {\em Spectral analysis for physical applications}.
\newblock cambridge university press, 1993.

\bibitem{stoica2005spectral}
Petre Stoica, Randolph~L Moses, et~al.
\newblock {\em Spectral analysis of signals}, volume 452.
\newblock Citeseer, 2005.

\bibitem{lang1980frequency}
S~Lang and J~McClellan.
\newblock Frequency estimation with maximum entropy spectral estimators.
\newblock {\em IEEE Transactions on Acoustics, Speech, and Signal Processing},
  28(6):716--724, 1980.

\bibitem{capon1969high}
Jack Capon.
\newblock High-resolution frequency-wavenumber spectrum analysis.
\newblock {\em Proceedings of the IEEE}, 57(8):1408--1418, 1969.

\bibitem{stoica2010new}
Petre Stoica, Prabhu Babu, and Jian Li.
\newblock New method of sparse parameter estimation in separable models and its
  use for spectral analysis of irregularly sampled data.
\newblock {\em IEEE Transactions on Signal Processing}, 59(1):35--47, 2010.

\bibitem{hayes2023sinusoidal}
Ben Hayes, Charalampos Saitis, and Gy{\"o}rgy Fazekas.
\newblock Sinusoidal frequency estimation by gradient descent.
\newblock In {\em ICASSP 2023-2023 IEEE International Conference on Acoustics,
  Speech and Signal Processing (ICASSP)}, pages 1--5. IEEE, 2023.

\bibitem{pan2021deep}
Pingping Pan, Yunjian Zhang, Zhenmiao Deng, and Wei Qi.
\newblock Deep learning-based 2-d frequency estimation of multiple sinusoidals.
\newblock {\em IEEE Transactions on Neural Networks and Learning Systems},
  33(10):5429--5440, 2021.

\bibitem{kung1983state}
Sun-Yuan Kung, K~Si Arun, and DV~Bhaskar~Rao.
\newblock State-space and singular-value decomposition-based approximation
  methods for the harmonic retrieval problem.
\newblock {\em Journal of the Optical Society of America}, 73(12):1799--1811,
  1983.

\bibitem{shi1994harmonic}
Zhenghao Shi and Frederick~W Fairman.
\newblock Harmonic retrieval via state space and fourth-order cumulants.
\newblock {\em IEEE Transactions on Signal Processing}, 42(5):1109--1119, 1994.

\bibitem{ascc2022}
Shuai Sun, Yilin Mo, and Keyou You.
\newblock Fundamental identification limit of single-input and single-output
  linear time-invariant systems.
\newblock In {\em 2022 13th Asian Control Conference (ASCC)}, pages 2157--2162.
  IEEE, 2022.

\bibitem{icca2022}
Shuai Sun and Yilin Mo.
\newblock Fundamental identification limit on diagonal canonical form for siso
  systems.
\newblock In {\em 2022 IEEE 17th International Conference on Control \&
  Automation (ICCA)}, pages 728--733. IEEE, 2022.

\bibitem{cdc2022}
Jiayun Li, Shuai Sun, and Yilin Mo.
\newblock Fundamental limit on siso system identification.
\newblock In {\em 2022 IEEE 61st Conference on Decision and Control (CDC)},
  pages 856--861. IEEE, 2022.

\bibitem{ho1966effective}
BL~Ho and Rudolf~E K{\'a}lm{\'a}n.
\newblock Effective construction of linear state-variable models from
  input/output functions.
\newblock {\em at-Automatisierungstechnik}, 14(1-12):545--548, 1966.

\bibitem{verhaegen1992subspace}
Michel Verhaegen and Patrick Dewilde.
\newblock Subspace model identification part 2. analysis of the elementary
  output-error state-space model identification algorithm.
\newblock {\em International journal of control}, 56(5):1211--1241, 1992.

\bibitem{duchesne1995subspace}
Laurent Duchesne, Eric Feron, James Paduano, and Marty Brenner.
\newblock Subspace identification with multiple data sets.
\newblock In {\em Guidance, Navigation, and Control Conference}, page 3716,
  1995.

\bibitem{tangirala2018principles}
Arun~K Tangirala.
\newblock {\em Principles of system identification: theory and practice}.
\newblock Crc Press, 2018.

\bibitem{kay1993fundamentals}
Steven~M Kay.
\newblock {\em Fundamentals of statistical signal processing: estimation
  theory}.
\newblock Prentice-Hall, Inc., 1993.

\bibitem{mo2009network}
Yilin Mo, Roberto Ambrosino, and Bruno Sinopoli.
\newblock Network energy minimization via sensor selection and topology
  control.
\newblock {\em IFAC Proceedings Volumes}, 42(20):174--179, 2009.

\bibitem{liu2013fast}
Tiantian Liu, Adam~W Bargteil, James~F O'Brien, and Ladislav Kavan.
\newblock Fast simulation of mass-spring systems.
\newblock {\em ACM Transactions on Graphics (TOG)}, 32(6):1--7, 2013.

\bibitem{beckermann2017singular}
Bernhard Beckermann and Alex Townsend.
\newblock On the singular values of matrices with displacement structure.
\newblock {\em SIAM Journal on Matrix Analysis and Applications},
  38(4):1227--1248, 2017.

\bibitem{hwang2004cauchy}
Suk-Geun Hwang.
\newblock Cauchy's interlace theorem for eigenvalues of hermitian matrices.
\newblock {\em The American mathematical monthly}, 111(2):157--159, 2004.

\bibitem{wedin1973perturbation}
Per-{\AA}ke Wedin.
\newblock Perturbation theory for pseudo-inverses.
\newblock {\em BIT Numerical Mathematics}, 13(2):217--232, 1973.

\end{thebibliography}

\section{Appendix}
\subsection{Preliminaries}
\begin{lemma}\label{krylov matrix}
	Given a $ n \times mp $ ($ p \leq  n $) matrix $ X_{n,mp} $ satisfying 
	\begin{equation}
		X_{n,mp} = \begin{bmatrix}
			W_p &
			DW_p &
			\cdots &
			D^{m-1}W_p
		\end{bmatrix},
	\end{equation}	
	where $ W_p \in \mathbb{R}^{n \times p} $ and $ D $ is a \ssun{normal} matrix with real eigenvalues, then the smallest singular value of $ X_{n,mp} $ satisfies
	\begin{equation}\label{krylov matrix_result}
		\sigma_{\min}(X_{n,mp}) \leq 4 \rho_{}^{-\frac{\left\lfloor \frac{\min\{n,m(p-[p]_*)\}-1}{2p} \right\rfloor}{\log (2mp)}} \|X_{n,mp}\|, 
	\end{equation}
	where $ \rho \triangleq e^{\frac{\pi^2}{4}} $, and $ [p]_* = 0 $ if $ p $ is even or $ p = 1 $ and is $ 1 $ if $ p $ is an odd number greater than $ 1 $.
\end{lemma}

\begin{proof}
	If $ p $ is equal to $ 1 $, the proof can be directly completed based on Corollary 5.3 in~\cite{beckermann2017singular}, thus in the following we  consider the case where $ p $ is greater than $ 1 $.
	It is not difficult to verify that matrix $ X_{n,mp} $ satisfies the following Sylvester matrix equation:
	\begin{equation}\label{sylvester matrix equation}
		DX_{n,mp} - X_{n,mp}P = \begin{bmatrix}
			\mathbf{0}_{n\times (m-1)p} & D^mW_p - W_pD_1
		\end{bmatrix},
	\end{equation} 
	where $ P = \begin{bmatrix}
		& D_1 \\ I_{(m-1)p} &
	\end{bmatrix} $, and $ D_1 = \begin{bmatrix}
		1 & \\ & -I_{p-1}
	\end{bmatrix} $.
	
	The following analysis is somewhat similar to the content of Section 5.1 in~\cite{beckermann2017singular}. For completeness, here we give a complete proof.
	For the following analysis, it can be assumed that $p$ is even. This is without loss of generality, since we can use Cauchy interlace theorem~\cite{hwang2004cauchy}. To see this, if $ p $ is an odd number greater than $ 1 $, let matrix $ W_{p-1} $ be the $ n \times (p-1) $ matrix obtained from $ W_{p} $ by removing its last column and $ X_{n,m(p-1)} $ be 
	\[
	\begin{bmatrix}
		W_{p-1}  & DW_{p-1} & \cdots & D^{m-1}W_{p-1}
	\end{bmatrix}.
	\]
	According to Cauchy interlace theorem~\cite{hwang2004cauchy}, it can be obtained that
	\[
	\sigma_{\min}(X_{n,mp}) \leq \sigma_{\min}(X_{n,m(p-1)}) \leq \|X_{n,m(p-1)}\| \leq \|X_{n,mp}\|,
	\]
	this implies that 
	\begin{equation}\label{even and odd}
		\frac{\sigma_{\min}(X_{n,mp})}{\|X_{n,mp}\|} \leq \frac{\sigma_{\min}(X_{n,m(p-1)}) }{\|X_{n,m(p-1)}\|}.
	\end{equation}
	
	In the following, we  assume that $p$ is even.
	Note that matrices $ D $ and $ P$ in Sylvester matrix equation \eqref{sylvester matrix equation} are both normal matrices. The eigenvalues of $ D $ are real numbers, and the eigenvalues of $ P $ are the $ mp $ (shifted) roots of unity, i.e, 
	\[
	\lambda(P) = \left\{
	z \in \mathbb{C} \mid z^{mp} = -1
	\right\}.
	\] 
	Since $ p $ is even, the spectrum $ \lambda(P) $ of $ P $ does not intersect the real axis. On the other hand, the rank of $  D^mW_p - W_pD_1 $ does not exceed $ p $. Applying Theorem 2.1, Corollary 3.2 and Lemma 5.1 in \cite{beckermann2017singular} yields that 
	\begin{equation}\label{even}
		\sigma_{\min}(X_{n,mp}) \leq 4 \rho^{-\frac{\left\lfloor \frac{\min\{n,mp\}-1}{2p} \right\rfloor}{\log (2mp)}} \|X_{n,mp}\|.
	\end{equation}
	If $ p $ is odd, then the proof can be completed according to \eqref{even and odd} and \eqref{even}.
\end{proof}

\begin{remark}
	\ssun{The result of Lemma~\ref{krylov matrix} can be extended to the case where $D$ has complex eigenvalues. This extension can relax the assumption in Assumption~\ref{assumption1} regarding the eigenvalues of matrix $A$ (the system poles), allowing for complex values. Specifically, we assume that the complex eigenvalues of $D$ lie within a region $\mathcal{D}$. In this scenario, the matrix $X_{n,mp}$ still satisfies the Sylvester matrix equation~\eqref{sylvester matrix equation}, preserving a displacement structure. Applying the results from~\cite{beckermann2017singular}, Inequality~\eqref{krylov matrix_result} remains valid, though the constant $\rho$ now depends on the region $\mathcal{D}$ rather than the fixed value $e^{\frac{\pi^2}{4}}$.
	Essentially, according to~\cite{beckermann2017singular}, one would need to construct a rational function of a certain degree that approximates $1$ on the complex region $\mathcal{D}$ and approximates $0$ on the set $\{z \in \mathbb{C} \mid z^{mp} = -1\}$. The constant $\rho$ would depend on the quality of such an approximation.}
	
	\ssun{In practice, there is no universal method to specify the region $\mathcal{D}$ for the complex poles, as this would require prior information about the system.}
\end{remark}

\begin{lemma}[\cite{beckermann2017singular}]\label{hankel matrix}
	Let $ H_n $ be an $ n \times n $ real positive definite Hankel matrix, then
	\begin{equation}
		\sigma_{j+2k}(H_n)\leq 16\rho^{-\frac{2k-2}{\log (2n)}}\sigma_{j}(H_n), 1 \leq j + 2k \leq n,
	\end{equation} 
	where $ \rho = e^{\frac{\pi^2}{4}} $.
\end{lemma}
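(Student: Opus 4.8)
The plan is to reproduce this bound inside the displacement-structure framework of \cite{beckermann2017singular}, i.e.\ the same machinery already invoked in the proof of Lemma~\ref{krylov matrix}. The governing principle is that the singular-value ratios of a matrix $X$ are controlled by how well $X$ solves a Sylvester equation $AX - XB = M$ with $A,B$ normal: if $\sigma(A)\subseteq E$ and $\sigma(B)\subseteq F$ for disjoint sets $E,F$ and $\operatorname{rank}(M)=\nu$, then $\sigma_{j+\nu k}(X)\le Z_k(E,F)\,\sigma_j(X)$, where $Z_k(E,F)$ is the $k$-th Zolotarev number of the pair $(E,F)$ (Theorem~2.1 and Corollary~3.2 of \cite{beckermann2017singular}). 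The shift $j+2k$ in the statement already signals that the relevant displacement rank is $\nu=2$, so the entire argument reduces to two tasks: (i) exhibiting such a Sylvester equation for a positive definite Hankel matrix with \emph{separated} spectra $E,F$, and (ii) estimating the resulting Zolotarev number to extract the explicit constants.

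For task (i) I would use the fact that a positive definite Hankel matrix is a moment matrix, $H_n=\int v(t)v(t)^\top\,d\mu(t)$ with $v(t)=(1,t,\dots,t^{n-1})^\top$ and $\mu\ge 0$, and that the Hankel shift identity $h_{(i+1)+j}=h_{i+(j+1)}$ gives $H_n$ a displacement rank of at most $2$ with respect to a pair of shift-type operators. The naive choice of shifts is nilpotent, so I would instead pass, via the moment/Vandermonde representation together with a Cayley/Joukowski-type change of variable (this is exactly where positive definiteness is needed), to a normalized equation $AH_n-H_nB=M$ with $\operatorname{rank}(M)\le 2$ in which $\sigma(A)$ and $\sigma(B)$ lie in two disjoint real sets (mirroring how Lemma~\ref{krylov matrix} separates a real spectrum from the shifted roots of unity of $P$). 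Feeding this into Corollary~3.2 of \cite{beckermann2017singular} gives the skeleton bound $\sigma_{j+2k}(H_n)\le Z_k(E,F)\,\sigma_j(H_n)$.

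Task (ii) is then to bound $Z_k(E,F)$ for the configuration produced above, and this is where $\rho=e^{\pi^2/4}$, the factor $16$, and the $\log(2n)$ denominator must all emerge. I would invoke the explicit Zolotarev estimate of \cite{beckermann2017singular} (their Lemma~5.1, the same one used in Lemma~\ref{krylov matrix}, which contributes a $\log(2\cdot\text{size})$ in the denominator — here $\log(2n)$ since $H_n$ is $n\times n$), writing $Z_k(E,F)\le 4\exp(-\pi^2 k/\log(\text{cross-ratio}))$ and tracking how the cross-ratio of the four endpoints scales with the matrix size $n$. Combining the two disjoint pieces of the configuration squares the leading constant $4$ into $16$ and shifts the exponent from $k$ to $k-1$, yielding $Z_k(E,F)\le 16\,\rho^{-(2k-2)/\log(2n)}$, which is precisely the claimed inequality.

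The main obstacle I anticipate is task (i): producing the Sylvester equation for the \emph{positive definite} Hankel matrix with genuinely \emph{separated} spectra. The degenerate, nilpotent shift structure of a generic Hankel matrix gives $E\cap F\neq\emptyset$ and hence a vacuous Zolotarev bound (indeed, a general Toeplitz/Hankel matrix need not have decaying singular values at all), so the real content is the positivity-driven change of variables that turns the collapsed spectral picture into two separated sets, followed by controlling how their separation — hence the cross-ratio and ultimately the $\log(2n)$ factor — depends on $n$. Once this separation and its $n$-dependence are established, the Zolotarev estimate in task (ii) is essentially a citation of the explicit bounds in \cite{beckermann2017singular}.
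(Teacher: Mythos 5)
The first thing to note is that the paper does not actually prove this lemma: its ``proof'' is a one-line citation of Corollary~5.5 of \cite{beckermann2017singular}, so the relevant comparison is between your reconstruction and Beckermann--Townsend's actual argument. Your reconstruction has a genuine gap at precisely the step you flagged as the main obstacle, your task~(i), and the way you propose to fill it is not how the known proof works. There is no known Sylvester equation $AH_n - H_nB = M$ with $\operatorname{rank}(M)\le 2$ and separated normal spectra for the positive definite Hankel matrix \emph{itself}: the natural shift displacement of a Hankel matrix has rank at most $2$ but nilpotent (coincident) spectra, and no Cayley/Joukowski-type change of variables is available to separate them. Positive definiteness enters the actual proof not through a spectral normalization of a displacement equation for $H_n$, but through a \emph{factorization}: since $H_n\succ 0$ is a Hankel (moment) matrix, it admits a Vandermonde decomposition $H_n = V^\top D V$ with $V$ a real Vandermonde matrix and $D\succeq 0$ diagonal (Gaussian quadrature on the underlying measure), so that with $K = D^{1/2}V$ one has $H_n = K^\top K$ and hence $\sigma_j(H_n) = \sigma_j(K)^2$. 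The matrix $K$ is a Krylov matrix with a real-spectrum normal argument --- exactly the object treated by Corollary~5.3 of \cite{beckermann2017singular} and by the $p=1$ case of Lemma~\ref{krylov matrix} --- and it is \emph{this} matrix, not $H_n$, that satisfies a low-rank displacement equation with one spectrum on the real line and the other on (an even pairing of) shifted roots of unity. The index jump $2k$ in the statement comes from that pairing trick in the Krylov bound, not from a displacement rank $\nu = 2$ of $H_n$; reading $j+2k$ as signaling $\nu=2$ is a misdiagnosis.

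Consequently your accounting of the constants is also wrong: applying the Krylov bound $\sigma_{j+2k}(K)\le 4\,Z_k\,\sigma_j(K)$ with $Z_k \le \rho^{-(k-1)/\log(2n)}$ and then squaring through $\sigma_j(H_n)=\sigma_j(K)^2$ yields $\sigma_{j+2k}(H_n)\le 16\,Z_k^2\,\sigma_j(H_n)$, so the $16 = 4^2$ and the exponent $2k-2 = 2(k-1)$ arise from \emph{squaring}, not from ``combining two disjoint pieces of the configuration.'' Your task~(ii) is fine once the structure is in place --- it is indeed a citation of the explicit Zolotarev estimate (Lemma~5.1 of \cite{beckermann2017singular}) --- but as written the proposal rests on an unconstructed, and as far as is known unconstructible, rank-2 displacement equation for $H_n$ with separated spectra. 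The fix is to move the displacement structure onto the Vandermonde factor of $H_n$ rather than onto $H_n$ itself; you had all the ingredients (the moment representation, the Zolotarev machinery, the analogy with Lemma~\ref{krylov matrix}) but assembled them in an order that leaves the decisive step unproven.
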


\begin{proof}
	This is a direct corollary of Corollary 5.5 in the reference~\cite{beckermann2017singular}.
\end{proof}

\begin{theorem}[\cite{wedin1973perturbation}]\label{perturbation}
	\ssun{Given $\mathcal{A} \in \mathbb{C}^{m \times n}$, $\mathcal{B} \in \mathbb{C}^{m \times p}$ and $\mathcal{A}_1 = \mathcal{A} + \mathcal{E}_{\mathcal{A}}$, $\mathcal{B}_1 = \mathcal{B} + \mathcal{E}_{\mathcal{B}}$, suppose that the minimum Frobenius norm solutions to linear least squares problem $\min _x\| \mathcal{A} x - \mathcal{B} \|_{\rm F}$ and $\min _{{\widetilde{x}}}\| \mathcal{A}_1 {\widetilde{x}} - \mathcal{B}_1 \|_{\rm F}$ are $x^*$ and ${\widetilde{x}}^* = x^* + h$ respectively. If ${\rm rank}   [\mathcal{A}] = {\rm rank}  [\mathcal{A}_1] = n$ and $\|\mathcal{A}^\dagger\| \|\mathcal{E}_{\mathcal{A}}\|< 1$, then
		\begin{equation}\label{perturbation_1}
			\| h \| \leq \frac{\cond(\mathcal{A})}{\gamma_+ \|\mathcal{A}\|} \left( \|\mathcal{E}_{\mathcal{A}}\| \|x^*\| + \|\mathcal{E}_{\mathcal{B}}\|
			+ \frac{\cond(\mathcal{A})}{\gamma_+ \|\mathcal{A}\|} \|\mathcal{E}_{\mathcal{A}}\| \|r_x\|
			\right),
		\end{equation}
		where $\gamma_+ = 1 - \|\mathcal{A}^\dagger\| \|\mathcal{E}_{\mathcal{A}}\|$, and
		$r_x = \mathcal{B} - \mathcal{A}x^*$, $\eta = (\mathcal{A}^{\dagger})^{\rm H} x^*$.}
\end{theorem}

\begin{remark}
	\ssun{In fact, Theorem~\ref{perturbation} is the Frobenius norm version of Theorem 5.1 in~\cite{wedin1973perturbation}}. 
\end{remark}

\subsection{Proof of Lemma \ref{Ho-Kalman is ill-conditioned}}\label{OQH}
\begin{proof}
	Note that to make $ O $, $ Q $ full column and row rank respectively, the following inequalities must hold:
	\[
	n \leq mK_1 ,\, n \leq pK_2.
	\]
	Since $O,\,Q$	are order-$ n $ matrices, applying Lemma \ref{krylov matrix} gives that 
	\begin{equation}\label{ho-kalman-2}
		\sigma_{n}(O) \leq 4\rho^{-\frac{\left\lfloor \frac{n-1}{2m} \right\rfloor}{\log (2mK_1)} }\|O\| \leq 
		4\rho^{-\frac{\left\lfloor \frac{n-1}{2m} \right\rfloor}{\log (2mK_1)} }\|O\|_{\rm F}
	\end{equation}
	and
	\begin{equation}\label{ho-kalman-3}
		\sigma_{n}(Q) \leq 4\rho^{-\frac{\left\lfloor \frac{n-1}{2p} \right\rfloor}{\log (2pK_2)} }\|Q\| \leq 
		4\rho^{-\frac{\left\lfloor \frac{n-1}{2p} \right\rfloor}{\log (2pK_2)} }\|Q\|_{\rm F}.
	\end{equation}
	Then \eqref{O and Q} is a direct corollary of \eqref{ho-kalman-2} and \eqref{ho-kalman-3}.
	
	On the other hand, since $ H = OQ $, it is not difficult to get that
	\begin{equation}\label{ho-kalman-1}
		\sigma_{n}(H) \leq \min\left\{ \|Q\| \sigma_{n }(O), \|O\| \sigma_{n }(Q)\right\}
	\end{equation}
	by using the properties of singular values. It can be easily verified that 
	\begin{equation}\label{ho-kalman-4}
		\|O\|_{\rm F}^2 = \sum_{k=0}^{K_1-1}\|CA^k\|_{\rm F}^2 \leq \sum_{k=0}^{K_1-1}\|C\|_{\rm F}^2 \|A^k\|^2 \leq \overline{c}^2 mnK_1,
	\end{equation}
	where $ \overline{c} = \max_{i,j} |c_{ij}| $, 
	and
	\begin{equation}\label{ho-kalman-5}
		\|Q\|_{\rm F}^2 = \sum_{k=0}^{K_2-1}\|A^kB\|_{\rm F}^2 \leq \sum_{k=0}^{K_2-1}\|B\|_{\rm F}^2 \|A^k\|^2 \leq \overline{b}^2 pnK_2,
	\end{equation}
	where $ \overline{b} = \max_{i,j} |b_{ij}| $.
	Combining the above results and this fact that $ K_1 + K_2 = K $, we can complete the proof of \eqref{L}. 
\end{proof}

\subsection{Proof of Lemma~\ref{singular value of V}}

\begin{proof}
	Consider the following matrix $ \widetilde{V} $
	\begin{equation}\label{tilede{V}}
		\widetilde{V} = \begin{bmatrix}
			\widetilde{V}_{11}^\top & \widetilde{V}_{12}^\top & \cdots & \widetilde{V}_{mp}^\top
		\end{bmatrix}^\top_{pm(K-1) \times n},
	\end{equation}
	where for any $ 1 \leq i \leq m $, $ 1 \leq j \leq p $, 
	\begin{equation}
		\begin{aligned}
			\widetilde{V}_{ij} &= \begin{bmatrix}
				c_{i1}b_{1j}  & \cdots & c_{in}b_{nj} \\
				2c_{i1}b_{1j}\lambda_1 & \cdots & 2c_{in}b_{nj}\lambda_n \\
				\vdots & \ddots & \vdots \\
				(K-1)c_{i1}b_{1j}\lambda_1^{K-2} & \cdots & (K-1)c_{in}b_{nj}\lambda_n^{K-2} \\
			\end{bmatrix} \\
			&= \underbrace{\begin{bmatrix}
					1 & \\
					& 2 & \\
					& & \ddots & \\
					& & & K-1
			\end{bmatrix}}_{\Lambda_{K-1}} 
			\underbrace{\begin{bmatrix}
					1 & 1 & \cdots & 1 \\
					\lambda_1 & \lambda_2 & \cdots & \lambda_n \\
					\vdots & \vdots & \ddots & \vdots \\
					\lambda_1^{K-2} & \lambda_2^{K-2} & \cdots & \lambda_n^{K-2}
			\end{bmatrix}}_{\mathbf{V}_{K-1}} \\
			&\qquad \qquad\underbrace{\begin{bmatrix}
					c_{i1}b_{1j} & \\
					& c_{i2}b_{2j} & \\
					& &  \ddots & \\
					& & & c_{in}b_{nj}
			\end{bmatrix}}_{\Lambda_{ij}}.
		\end{aligned}
	\end{equation}
	It can be easily verified that $ V $ and $ \widetilde{V} $ have the same non-zero singular values, thus we only need to analyze the matrix $ \widetilde{V} $.
	Note that $ \widetilde{V} $ can be rewritten as 
	\begin{equation}\label{17}
		\widetilde{V}  = 
		\left( I_{pm} \otimes \Lambda_{K-1} \right) 
		\left( I_{pm} \otimes \mathbf{V}_{K-1} \right)
		\underbrace{
			\begin{bmatrix}
				\Lambda_{11}^\top &
				\Lambda_{12}^\top &
				\cdots &
				\Lambda_{pm}^\top
			\end{bmatrix}^\top}_{\Lambda}.
	\end{equation}
	Note that to make the matrix $V$ full rank, there must be that $K \geq \lceil \bm{\kappa} \rceil + 1 $, where $ \bm{\kappa} = n/(pm) $, then it follows that 
	\begin{equation}\label{overline V}
		\sigma_{\min}^2(V) = 
		\sigma_{\min}^2(\widetilde{V}) \leq K^2  \|\Lambda\|^2 \lambda_{\lfloor \bm{\kappa} \rfloor}(\mathbf{V}_{K-1}\mathbf{V}_{K-1}^{\top}).
	\end{equation}
	Note that $ \| \Lambda \|^2 \leq \| \Lambda \|_{\rm F}^2  \leq  n pm \overline{\bm{\delta}}^2 $, 
	thus next we only analyze $ \lambda_{\lfloor \bm{\kappa} \rfloor}(\mathbf{V}_{K-1}\mathbf{V}_{K-1}^{\top})$.
	Using Lemma \ref{hankel matrix} above yields that 
	\begin{equation}\label{V_{K-1}}
		\lambda_{\lfloor \bm{\kappa} \rfloor}(\mathbf{V}_{K-1}\mathbf{V}_{K-1}^{\top}) \leq 16 \rho^{-\frac{\lfloor \bm{\kappa} \rfloor -3}{\log (2K)}}  \| \mathbf{V}_{K-1}\mathbf{V}_{K-1}^{\top}\|.
	\end{equation}
	Note that $ \| \mathbf{V}_{K-1}\mathbf{V}_{K-1}^{\top}\| \leq \|\mathbf{V}_{K-1}\|_{\rm F}^2 \leq nK $, then combining it with the above results can complete the proof.
\end{proof}

\subsection{Proof of Theorem~\ref{moesp}}

\begin{proof}
	\ssun{Note that $\overline{A} = \underline{O}^\dagger \overline{O}$ can be expressed as the equivalent least-squares problem $ \min_{A} \left\|   \underline{O} A - \overline{O}
		\right\|_{\rm F}$.
		By applying the results of Theorem~\ref{perturbation} to the original least squares problem and its perturbed version, and observing that $ \| \Delta_{\underline{O}} \|  = \| \underline{\widehat{O}} -  \underline{O} \| \leq  \| \Delta_{O}\|$ and $\| \Delta_{\overline{O}} \|  = \| \overline{\widehat{O}} -  \overline{O} \| \leq  \| \Delta_{O}\|$, the proof follows.}
\end{proof}

\subsection{Proof of Theorem~\ref{fisher information matrix}}
\begin{proof}
	The dynamic response of the system \eqref{linear_system} in the first $ K $ time steps is as follow
	\begin{equation}\notag
		\underbrace{
			\begin{bmatrix}
				y_1 \\ y_2 \\ \vdots \\ y_{K}
		\end{bmatrix}}_{Y_{[1,K]}} = 
		\underbrace{
			\begin{bmatrix}
				H_0^{} &  &  &   \\
				H_1^{} & H_0^{} &  &  \\
				\vdots & \vdots & \ddots &  \\
				H_{K-1}^{} & H_{K-2}^{} & \cdots & H_0^{}
		\end{bmatrix}}_{\mathcal{H}_k}
		\underbrace{
			\begin{bmatrix}
				u_0^{} \\ u_1^{} \\ \vdots \\ u_{K-1}^{}
		\end{bmatrix}}_{U_{[0,K-1]}} 
		+ 		\underbrace{
			\begin{bmatrix}
				v_1 \\ v_2 \\ \vdots \\ v_{K}
		\end{bmatrix}}_{V_{[1,K]}},
	\end{equation}
	where $ H_k  = CA^kB$ is the $ k$-th Markov parameter of the system \eqref{linear_system}.

	Since the measurement noise $ V_{[1,K]} $ is Gaussian, and the input $ U_{[0,K-1]} $ is known, the output $ Y_{[1,K]} $ is also Gaussian and its probability density function conditioned on the value of system poles $ \bm{\lambda}$ can be obtained as
	\begin{equation}\label{pdf}
		f_{\bm{\lambda}}\left(Y_{[1,K]}  \right) = \frac{\exp\left(-\frac{1}{2} \left(Y_{[1,K]}  - \mu_K\right)^\top  \left(Y_{[1,K]}  - \mu_K\right) 	\right)}{\sqrt{(2\pi)^K }},
	\end{equation}
	where $ \mu_K \triangleq \mathcal{H}_K U_{[0,K-1]} $ represents the mean of $ Y_{[1,K]} $.
	According to the definition of  Fisher Information matrix, the $(i,j) $ element of the Fisher Information matrix $ \mathbf{I}\left(\bm{\lambda} \right) $ of unknown poles $ \bm{\lambda} $ has the following form
	\begin{equation}\notag
		\mathbf{I}_{i,j}\left(\bm{\lambda} \right) = \mathbb{E} \left[
		\frac{\partial}{\partial \lambda_i}\log f_{\bm{\lambda}}\left(Y_{[1,K]}\right)
		\frac{\partial}{\partial \lambda_j}\log f_{\bm{\lambda}}\left(Y_{[1,K]}\right)
		\right],
	\end{equation}
	where the notation $ \mathbb{E} $ is the expectation operator with respect to $ f_{\bm{\lambda}}\left(Y_{[1,K]}\right) $.
	Since the output $ Y_{[1,K]} $ is Gaussian, based on the results of section 3.9 of \cite{kay1993fundamentals}, we can further get that 
	\begin{equation}\label{fisher_information_31}
		\begin{aligned}
			\mathbf{I}_{i,j}\left(\bm{\lambda} \right) &= \left[\frac{\partial \mu_K}{\partial \lambda_i}\right]^\top \frac{\partial \mu_K}{\partial \lambda_j},
		\end{aligned}
	\end{equation}
	where $ \frac{\partial \mu_K}{\partial \lambda_i} = \begin{bmatrix}
		\frac{\partial [\mu_K]_1}{\partial \lambda_i} &
		\frac{\partial [\mu_K]_2}{\partial \lambda_i} &
		\cdots &
		\frac{\partial [\mu_K]_{mK}}{\partial \lambda_i}
	\end{bmatrix}^\top  $, and $ [\mu_K]_j $ denotes the $ j $-th entry of $ \mu_K $ for each $ j = 1,2,\cdots,mK $.
	Based on \eqref{fisher_information_31}, the Fisher Information matrix $ \mathbf{I}(\bm{\lambda}) $ can be obtained as follow
	\begin{equation}\label{fisher_information_A}
		\mathbf{I}\left(\bm{\lambda} \right) = 
		\underbrace{\begin{bmatrix}
				\frac{\partial \mu_K}{\partial \lambda_1} &
				\cdots & 
				\frac{\partial \mu_K}{\partial \lambda_{n}}
			\end{bmatrix}^\top}_{\mathcal{L}_{\bm{\lambda}}^\top}
		\underbrace{\begin{bmatrix}
				\frac{\partial \mu_K}{\partial \lambda_1} &
				\cdots & 
				\frac{\partial \mu_K}{\partial \lambda_{n}}
		\end{bmatrix}}_{\mathcal{L}_{\bm{\lambda}}}.
	\end{equation}	
	According to the chain derivation rule, $ \mathcal{L}_{\bm{\lambda}} $ can be decomposed as follow
	\begin{equation}\label{L=SV}
		\begin{aligned}
				&\mathcal{L}_{\bm{\lambda}} = \begin{bmatrix}
				\frac{\partial \mu_K}{\partial \lambda_1} &
				\cdots & 
				\frac{\partial \mu_K}{\partial \lambda_{n}}
			\end{bmatrix} = \\ 
			&\underbrace{
				\begin{bmatrix}
					\frac{\partial \mu_{K}}{\partial 		[H_{0}]_{11}} &  \cdots & \frac{\partial \mu_{K}}{\partial [H_{K-1}]_{pm}}
			\end{bmatrix}}_{S}
			\underbrace{
				\begin{bmatrix}
					\frac{\partial [H_{0}]_{11}}{\partial \lambda_{1}} &  \cdots & \frac{\partial  [H_{0}]_{11}}{\partial \lambda_{n}} \\ 
					\frac{\partial  [H_{0}]_{12}}{\partial \lambda_{1}} &  \cdots & \frac{\partial  [H_{0}]_{12}}{\partial \lambda_{n}} \\
					\vdots  & \ddots & \vdots \\
					\frac{\partial  [H_{K-1}]_{pm}}{\partial \lambda_{1}} &  \cdots & \frac{\partial  [H_{K-1}]_{pm}}{\partial \lambda_{n}}
			\end{bmatrix}}_{V}.
		\end{aligned}
	\end{equation}
	Combined with the expressions of $ \mu_{K} $ and $ H_k $, it is not difficult to get the expressions of $ S $  and $V$ as shown in \eqref{S} and \eqref{V}, respectively.
\end{proof}

\subsection{Proof of Theorem \ref{worst case-cramer-rao}}
\begin{proof}
	According to Theorem \ref{worst case}, Theorem \ref{cramer-rao bound} and the proof of Theorem \ref{fisher information matrix}, we only need to verify that the probability density function $ f_{\bm{\lambda}}\left(Y_{[1,K]}  \right) $ in \eqref{pdf} satisfies the regularity conditions
	\[
	\mathbb{E}\left[\frac{\partial \log f_{\bm{\lambda}}\left(Y_{[1,K]}  \right)}{\partial \lambda_i} \right] = 0, \text{ for each } i = 1,2,\cdots,n,
	\]
	where the expectation is taken with respect to $ f_{\bm{\lambda}}\left(Y_{[1,K]}  \right) $. 
	For each $ i = 1,2,\cdots,n $, it follows that
	\begin{equation}\label{key}
		\begin{aligned}
			\frac{\partial \log f_{\bm{\lambda}}\left(Y_{[1,K]}  \right)}{\partial \lambda_i} &=
			-\frac{1}{2}\frac{\partial}{\partial \lambda_i}\left(Y_{[1,K]}  - \mu_K\right)^\top  \left(Y_{[1,K]}  - \mu_K\right) \\
			&=
			\sum_{j=1}^{mK} \left(\left[Y_{[1,K]}\right]_j - \left[\mu_{K}\right]_j\right) \frac{\partial}{\partial \lambda_i}\left[\mu_{K}\right]_j,
		\end{aligned}
	\end{equation}
	where $ \left[Y_{[1,K]}\right]_j $ and $ [\mu_K]_j $ denote the $ j $-th entry of $ \mu_K $ and $ Y_{[1,K]} $, respectively. Now take the expectation of both sides with respect to $ f_{\bm{\lambda}}\left(Y_{[1,K]}  \right) $, then we can get that
	\begin{multline}
		\mathbb{E}\left[\frac{\partial \log f_{\bm{\lambda}}\left(Y_{[1,K]}  \right)}{\partial \lambda_i} \right] \\
		=\mathbb{E} \left[ 
		\sum_{j=1}^{mK} \left(\left[Y_{[1,K]}\right]_j - \left[\mu_{K}\right]_j\right) \frac{\partial}{\partial \lambda_i}\left[\mu_{K}\right]_j
		\right]
		\\ 
		= \sum_{j=1}^{mK} \mathbb{E} \left[\left[Y_{[1,K]}\right]_j - \left[\mu_{K}\right]_j\right] \frac{\partial}{\partial \lambda_i}\left[\mu_{K}\right]_j 
		= 0,
	\end{multline}
	where the last equality holds because $ Y_{[1,K]} $ is Gaussian with mean $ \mu_{K} $, which completes the proof.
\end{proof}

\end{document}